\newcommand{\vc}[1]{\boldsymbol{#1}}
\newcommand{\nodes}{\ensuremath{\mathcal{V}}\xspace}
\newcommand{\edges}{\ensuremath{\mathcal{E}}\xspace}
\newcommand{\sources}{\ensuremath{\mathcal{S}}\xspace}
\newcommand{\source}{\ensuremath{s}\xspace}
\newcommand{\learners}{\ensuremath{\mathcal{L}}\xspace}
\newcommand{\learner}{\ensuremath{\ell}\xspace}
\newcommand{\x}{\ensuremath{\vc{x}}\xspace}
\newcommand{\paths}{\ensuremath{\mathcal{P}}\xspace}
\newcommand{\p}{\ensuremath{p}\xspace}
\newcommand{\reals}{\mathbb{R}}
\newcommand{\naturals}{\mathbb{N}}
\newcommand{\map}{\mathtt{MAP}}
\newcommand{\cov}{\mathtt{cov}}
\newcommand{\types}{\ensuremath{\mathcal{T}}\xspace}
\newcommand{\type}{\ensuremath{t}\xspace}
\newcommand{\feasibleset}{\mathcal{D}}
\newcommand{\proj}{\mathrm{\Pi}}
\newcommand{\prob}{\mathbf{P}}
\newcommand{\SR}{\ensuremath{\mathtt{TOT}}}
\newcommand{\xset}{\ensuremath{\mathcal{X}}\xspace}
\newtheorem{lemma}{\textbf{Lemma}}
\newtheorem{theorem}{\textbf{Theorem}}
\newtheorem{definition}{\textbf{Definition}}
\newtheorem{ass}{\textbf{Assumption}}
\newtheorem{cor}{\textbf{Corollary}}
\newcommand{\blu}[1]{{\color{black}#1}}
\newcommand{\CR}[1]{{\color{black}#1}}
\newcommand{\arxiv}[2]{#1} 
\newenvironment{packed-itemize}
{\leftmargini=15pt\begin{itemize}
\setlength{\itemsep}{-0.3pt}}
{\end{itemize}\vspace{-0.2em}}
\newenvironment{packed-enumerate}
{\leftmargini=15pt\begin{enumerate}
\setlength{\itemsep}{-0.3pt}}
{\end{enumerate}\vspace{-0.2em}}
\def\BibTeX{{\rm B\kern-.05em{\sc i\kern-.025em b}\kern-.08em
    T\kern-.1667em\lower.7ex\hbox{E}\kern-.125emX}}
\begin{document}

\title{Distributed Experimental Design Networks
}

\author{\IEEEauthorblockN{ Yuanyuan Li,\textsuperscript{1} Lili Su,\textsuperscript{1} Carlee Joe-Wong,\textsuperscript{2} Edmund Yeh,\textsuperscript{1} and Stratis Ioannidis,\textsuperscript{1}}
\IEEEauthorblockA{ \textsuperscript{1}Northeastern University, \textsuperscript{2}Carnegie Mellon University, \\
Email: yuanyuanli@ece.neu.edu, l.su@northeastern.edu, cjoewong@andrew.cmu.edu, \{eyeh, ioannidis\}@ece.neu.edu
}
}

\maketitle

\begin{abstract}
As edge computing capabilities increase, model learning deployments in diverse edge environments have emerged. In  experimental design networks, introduced recently, network routing and rate allocation are designed to aid the transfer of data from sensors to heterogeneous learners.  We design efficient experimental design network algorithms that are (a) distributed and (b) use multicast transmissions. This setting poses significant challenges as classic decentralization approaches often operate on (strictly) concave objectives under differentiable constraints. In contrast, the problem we study here has a non-convex, continuous DR-submodular objective, while multicast transmissions naturally result in non-differentiable constraints. From a technical standpoint, we propose a distributed Frank-Wolfe and a distributed projected gradient ascent algorithm that, coupled with a relaxation of non-differentiable constraints, yield allocations within a $1-1/e$ factor from the optimal. 
Numerical evaluations show that our proposed algorithms outperform competitors with respect to model learning quality.
\end{abstract}

\begin{IEEEkeywords}
Experimental Design, DR-submodularity, Bayesian linear regression, Distributed algorithm.
\end{IEEEkeywords}

\section{Introduction}
\blu{ We study \emph{experimental design networks}, as introduced by Liu at al.~\cite{liu2022experimental}. In these networks,    illustrated in Fig.~\ref{fig:example},  learners and data sources are dispersed across  different locations in a network. Learners receive streams of data collected from the sources,
and subsequently use them to train models. We are interested in  rate allocation strategies that maximize the quality of model training at the learners, subject to  network constraints.  }\blu{This problem is of practical significance. For instance, in a smart city \cite{mohammadi2018enabling, albino2015smart}, various sensors capture, e.g., image, temperature, humidity, traffic, and seismic measurements, which can help forecast transportation traffic, the spread of disease, pollution levels, the weather, etc. Distinct,  dispersed public service entities, e.g., a transportation authority, an energy company, the fire department, etc., may perform  different training and prediction tasks on these data streams.  
}

\blu{ Even though 
the resulting optimization of rate allocations is non-convex, Liu et al.~\cite{liu2022experimental} provide a polynomial-time $(1-1/e)$-approximation algorithm,  exploiting a useful property of the learning objective, namely, continuous DR-submodularity \cite{bian2017guaranteed,soma2015generalization}. 
Though~\cite{liu2022experimental} lays a solid foundation for studying this problem,  the algorithm proposed suffers from several limitations. 
First, it is centralized, and requires a full view of network congestion conditions, demand, and learner utilities. This  significantly reduces scalability  when the number of  sources and learners are large.
In addition, it uses unicast transmissions between sources and learners. 
In practice, this significantly under-utilizes network resources when learner interests in data streams overlap. 
}

\begin{figure}[!t]
\centering
\includegraphics[width=0.9\linewidth]{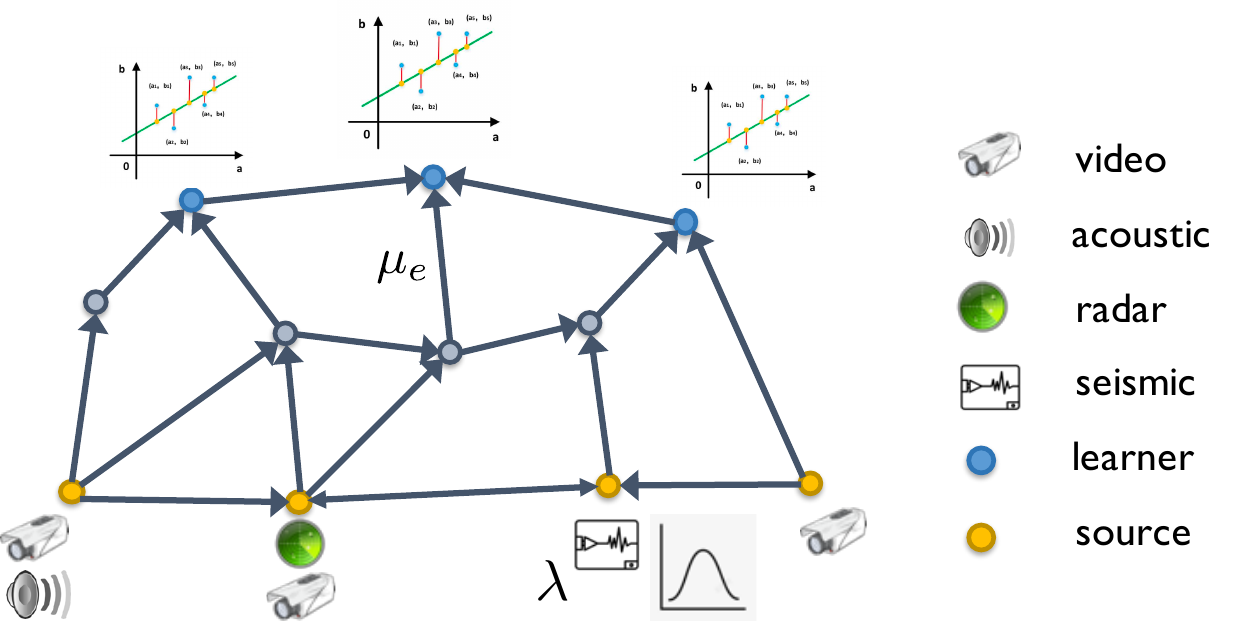}
\caption{An  \emph{experimental design network}~\cite{liu2022experimental}.  Sources (\textcolor{orange}{\texttt{yellow}}) generate streams of data from diverse sensors, e.g., cameras,  microphones, seismic sensors, etc. 
Learners (\textcolor{blue}{\texttt{blue}}) train distinct models over (possibly overlapping) received data. We wish to   
allocate bandwidth to data traffic in a manner that maximizes the social welfare, i.e., the aggregate quality of models  across learners. 
}
\label{fig:example}
\end{figure}
\blu{ In this paper, we aim to design efficient algorithms that are (a) distributed and (b) use multicast transmission. Achieving this goal is far from trivial. 
First, classic decentralization approaches, such as, primal-dual algorithms \cite{low1999optimization,srikant2004mathematics,lun2005achieving}, often operate on (strictly) concave objectives. In contrast, the problem we study here has a non-concave, continuous DR-submodular objective. Second, multicast transmissions naturally result in non-differentiable constraints (see \cite{srikant2004mathematics,lun2006minimum}, but also Eqs.~\eqref{cons: link capacity} and \eqref{cons: source} in Sec.~\ref{sec: problem formulation}). This further hinders  standard decentralization techniques. Our {\bf contributions} are  as follows:}%
\begin{packed-itemize}
    \item \blu{We incorporate multicast transmissions to experimental design networks. This is more realistic when learning jointly from common sensors, and yields a significantly higher throughput in comparison to unicast transmissions.}
    \item We prove that, assuming Poisson data streams in steady state and  Bayesian linear regression  as a learning task, as in \cite{liu2022experimental}, our experimental design objective remains continuous DR-submodular.
    \item  We construct \emph{both} centralized \emph{and} distributed algorithms within a $1-1/e$ factor from the optimal in this setting. For the latter, we make use of a primal dual technique that addresses both the non-differentiability of constituent multicast constraints, as well as the lack of strict convexity exhibited by our problem.
    \item We conduct extensive simulations over both synthetic and backbone network topologies. Our proposed algorithms outperform all competitors w.r.t. the quality of model estimation, and our distributed algorithms perform closely to their centralized versions.
\end{packed-itemize} 

\blu{From a technical standpoint, we couple the Frank-Wolfe algorithm by Bian et al.~\cite{bian2017guaranteed}, also used by Liu et al.~\cite{liu2022experimental}, with a nested distributed step; the latter  deals with the non-differentiability of multicast constraints through an $l_p$ relaxation of the max norm.
We also implement two additional extensions \CR{sketched out} by Liu et al.~\cite{liu2022experimental}: we consider (a) Gaussian data sources, that are (b) subject to  heterogeneous noise. We incorporate both in our mathematical formulation and theoretically and experimentally characterize performance under these extensions.   Gaussianity  
requires revisiting how gradient estimation is performed, compared to Liu et al., as well as devising new estimation bounds.} 

The remainder of this paper is organized as follows. Sec.~\ref{sec: related work} provides a literature review. We introduce our distributed model in Sec.~\ref{sec: problem formulation}. Sec.~\ref{sec: centralized algorithm} describes our analysis of the problem and proposed centralized algorithm, while Sec.~\ref{sec: distributed algorithm} describes our distributed algorithm. We  propose additional distributed algorithms in Sec.~\ref{sec: extensions}. We present numerical experiments in Sec.~\ref{sec: numerical evaluation}, and conclude in Sec.~\ref{sec: conclusion}.

\section{Related Work}
\label{sec: related work}
\noindent\textbf{Experimental Design.} \CR{As discussed by Liu et al.~\cite{liu2022experimental}, experimental design is classic under a single user with an experiment budget constraint~\cite{boyd2004convex, pukelsheim2006optimal}, while the so-called D-Optimality criterion is a popular objective~\cite{horel2014budget,guo2019accelerated,gast2020linear, guo2018experimental,huan2013simulation}.} \blu{Liu et al. \cite{liu2022experimental} are the first to extend this objective to the context of experimental design networks. 
As discussed in the introduction, we deviate from Liu et al.~by proposing a decentralized algorithm and considering multicast transmissions; both are practically important and come with technical challenges. }
\blu{Liu et al.~make additional restrictive assumptions, including, e.g., that data samples come from a finite set and that labeling noise is homogeneous across sources, but mention that their analysis could be extended to amend these assumptions. We implement this extension by considering Gaussian sources and noise heteroskedasticity, and proving gradient estimation bounds using appropriate Chernoff inequalities (see, e.g., Lem.~\ref{lem: HEAD bound4}). }

\noindent\textbf{Submodular Maximization.} Submodularity is traditionally \CR{explored within} the context of set functions \cite{calinescu2011maximizing}, but can also be extended to functions over the integer lattice \cite{soma2015generalization} and the continuous domain~\cite{bian2017guaranteed}. Maximizing a monotone submodular function subject to a matroid constraint is classic. Krause and Golovin \cite{krause2014submodular} show that the greedy algorithm achieves a $1/2$ approximation ratio. Calinescu et al. \cite{calinescu2011maximizing} propose a \emph{continuous greedy} algorithm improving the ratio to $1-1/e$ that applies a Frank-Wolfe (FW)~\cite{bertsekas1999nonlinear}  variant to the multilinear extension of the submodular objective. With the help of auxiliary potential functions, 
Bian et al.~\cite{bian2017guaranteed} show that the same FW variant can be used to maximize continuous DR-submodular functions within a $1-1/e$ ratio. The centralized algorithm by Liu et al.~\cite{liu2022experimental}, and ours, are applications of the FW variant~\cite{bian2017guaranteed}; in both cases, recovering their guarantees requires devising novel gradient estimators and bounding their estimation accuracy. We also depart by considering a distributed version of this algorithm, where each node accesses  only neighborhood knowledge. 

\noindent\textbf{Convergence of Primal-Dual Algorithms.} Nedi{\'c} and Ozdaglar~\cite{nedic2009subgradient} propose a subgradient algorithm for generating approximate saddle-point solutions for a convex-concave function. Assuming  Slater's condition and bounded Lagrangian gradients, they provide bounds on the primal objective function. 
Alghunaim and Sayed \cite{alghunaim2020linear} prove linear convergence for primal-dual gradient methods. The methods apply to augmented Lagrangian formulations, whose primal objective is smooth and strongly convex under equality constraints (ours are inequality constraints). 
Lyapunov equations are usually employed for a continuous version of the primal-dual gradient algorithm \cite{srikant2004mathematics}; this requires objectives to be strictly concave. Feijer and Paganini \cite{feijer2010stability} prove the stability of primal–dual gradient dynamics with concave objectives through Krasovskii’s method and the LaSalle invariance principle. We follow  \cite{feijer2010stability}  to ensure  convergence of our decentralized algorithm. 

\noindent\textbf{Distributed Algorithms.} Distributed algorithms for the maximization of strictly concave objectives under separable constraints are classic (see, e.g., \cite{low1999optimization,srikant2004mathematics,lun2006minimum,bertsekas2015parallel}). 
%
Our objective is continuous DR-submodular; thus, these methods do not directly extend to our setting. 
Tychogiorgos et al.~\cite{tychogiorgos2013non} provide the theoretical foundations for distributed dual algorithm solution of non-convex problems.
\blu{Mokhtari et al.~\cite{mokhtari2018decentralized} propose a partially decentralized continuous greedy algorithm for DR-submodular maximization subject to down-closed convex constraints and prove a $1-1/e$ guarantee. However, the conditional gradient update step they propose requires global information and remains centralized.}
Our analysis requires combining above techniques, 
with the (centralized) Frank-Wolfe variant by Bian et al.~\cite{bian2017guaranteed}, which yields a $1-1/e$ approximation guarantee. Doing so requires dealing with both the lack of strict convexity of the constituent problem, as well as the non-differentiability of multicast constraints.

\section{Problem Formulation}
\label{sec: problem formulation}

\CR{Our model, and its exposition below, follows closely Liu et al.~\cite{liu2022experimental}: for consistency, we use the same notation and terminology. We depart in multiple ways. First and foremost, we (a) seek a \emph{distributed algorithm} determining the rate allocation, requiring knowledge only from itself and its neighbourhoods, while the centralized algorithm requires global information, and (b) we extend the analysis from unicast to multicast, allowing sharing of the same traffic across learners, thereby increasing throughput. We also implement two potential extensions drafted by Liu et al.~\cite{liu2022experimental}: we (c) generate features from Gaussian sources instead of a finite set, changing subscripts of variables from feature $\x$ to source $\source$, (d) adopt source instead of hop-by-hop routing, which changes the optimized random variables, leading to a general directed graph instead of a DAG (directed acyclic graph), and (f) incorporate heterogeneous noise over source $\source$ and types $\type$ instead of homogeneous constant noise. }

\noindent\textbf{Network.} We model 
the system 
as a  multi-hop network with a topology represented by a  \emph{directed graph} $\mathcal{G}(\nodes, \edges)$, where $\nodes$ is the set of nodes and $\edges\subset \nodes\times\nodes$ is the set of links.  Each link $e=(u,v)\in\edges$ has a link capacity $\mu^e\geq 0$. Sources $\sources\subset \nodes $ generate data streams, while learners $\learners\subset \nodes$ reside at sinks. 

\noindent\textbf{Data Sources.} Each data source $\source\in\sources$ generates a sequence of labeled pairs $(\x,y)\in \reals^d\times \reals$ of type $\type$ according to a Poisson process of rate $\lambda_{\source,\type}\geq 0,$ corresponding to measurements or experiments the source conducts (each pair is a new measurement). 
Intuitively, features $\x$ correspond to covariates  in an experiment (e.g., pixel values in an image, etc.),  label types $\type\in\types$ correspond to possible measurements (e.g., temperature, radiation level, etc.), and labels $y$ correspond to the actual measurement value collected (e.g., $23^{\circ}\text{C})$.

  \blu{
The generated data follows a linear regression model~\cite{james2013introduction,gallager2013stochastic},} i.e., for every type $\type\in\types$ from source $\source \in \sources$, there exists a $\vc{\beta}_\type\in\reals^d$ such that $y = \x^\top\vc{\beta}_\type + \epsilon_{\source, \type}$ where $\epsilon_{\source, \type} \in \reals$ are i.i.d.~zero mean normal noise variables with variance $\sigma^2_{\source, \type} > 0$.
  Departing from 
  Liu et al.~\cite{liu2022experimental}, but also from classic experimental design \cite{boyd2004convex}, where $\source$ samples feature vectors $\x\in\reals^d$ from a finite set, we assume that they are sampled from a \emph{Gaussian} distribution $N(\vc{0}, \vc{\Sigma}_{\source})$.
Again in contrast to \cite{liu2022experimental}, we allow for \emph{heterogeneous} (also known as~heteroskedastic) noise levels $\sigma_{\source,\type}$ across \emph{both} sources and experiment types.


\noindent\textbf{Learners and Bayesian Linear Regression}. 
\blu{
Each learner $\learner \in \learners$ wishes to learn a model $\vc{\beta}_{\type^\learner}$ for some type $\type^\learner\in \types$, via \emph{Bayesian linear regression}. 
In particular, each
$\learner$ has a Gaussian prior $N(\vc{\beta}_0^\learner,\vc{\Sigma}_0^\learner)$ on the model $\vc{\beta}_{\type^\learner}$ it wishes to estimate. 
We assume that the system operates for a data acquisition time period $T$. 
Let $n_{\source}^\learner\in\naturals$ be the cumulative number of pairs $(\x,y)$ from source $\source$ collected by learner $\learner$ during this period, and $\vc{n}^\learner = [n^{\learner}_{\source}]_{\source\in \sources}$ the vector of arrivals at learner $\learner$ from all sources.
We denote by $\x_{\source,i}^\learner$, $y_{\source,i}^\learner$ the $i$-th feature and label generated from source $\source$ to reach learner $\learner$, and $\vc{X}^{\learner} = [[ \x_{\source,i}^{\learner} ]_{i = 1}^{n_{\source}^\learner}]_{\source \in \sources}$, $\vc{y}^\learner = [[ y_{\source,i}^{\learner} ]_{i = 1}^{n_{\source}^\learner}]_{\source \in \sources}$ the feature matrix and label vector, respectively, received at learner $\learner$ up to time $T$. Then, maximum a-posterior (MAP) estimation \cite{gallager2013stochastic,liu2022experimental} at learner $\learner$ amounts to:
\begin{align}
\label{eq:map}
\begin{split}
   \hat{\vc{\beta}}_\map^\learner = ((\vc{X}^\learner)^\top (\vc{\tilde{\Sigma}}^\learner)^{-1} \vc{X}^\learner + (\vc{\Sigma}_0^\learner)^{-1})^{-1}\cdot\\ ((\vc{X}^\learner)^{\top} (\vc{\tilde{\Sigma}}^\learner)^{-1} \vc{y}^\learner + (\vc{\Sigma}_0^\learner)^{-1}\vc{\beta}_0^\learner ),
   \end{split}
\end{align}
where $\vc{\tilde{\Sigma}}^\learner = \reals^{n^\learner\times n^\learner}$ is a diagonal noise covariance matrix, containing corresponding source noise covariances $\sigma_{s,t^\learner}^2$ in its diagonal. }
\blu{
The quality of this estimator is determined by the error covariance \cite{gallager2013stochastic}, i.e., the $d\times d$ matrix:
\begin{equation}
 \label{eq:cov}
\cov(\hat{\vc{\beta}}_\map^\learner-\vc{\beta}_{t^\learner}) = \big((\vc{X}^\learner)^{\top} (\vc{\tilde{\Sigma}}^\learner)^{-1} \vc{X}^\learner  + (\vc{\Sigma}_0^\learner)^{-1}\big)^{-1}.  
\end{equation}
The covariance summarizes the estimator quality in all directions in $\reals^d$: directions $\vc{x}\in \reals^d$ of high variance (e.g., eigenvectors in which eigenvalues of  $\cov(\hat{\vc{\beta}}_\map^\learner-\vc{\beta}_{t_\learner})$  are high) are directions in which the prediction  $\hat{y}=\vc{x}^\top \hat{\vc{\beta}}_\map^\learner$ will have the highest prediction error. 
}

\noindent\textbf{Network Constraints}. Data pairs $(\x,y)\in\reals^d\times\reals$ of type $\type \in \types$ generated by sources are transmitted over paths in the network and eventually delivered to learners. \blu{Furthermore, we consider the \emph{multirate multicast} transmission~\cite{srikant2004mathematics}, which saves network resources compared to unicast. 
That is, we assume  that each source $\source$  has a set of paths $\paths_{\source, \type}$ over which data pairs of type $\type$ are routed:
each path $\p \in \paths_{\source, \type}$ links to a different learner.
Note that $|\paths_{\source, \type}|$, the size of $\paths_{\source, \type}$, equals  the number of learners with type $t$.   We denote the \emph{virtual} rate~\cite{srikant2004mathematics}, 
with which data pairs of type $\type$ from source $\source$ are transmitted through path $\p\in\paths_{\source, \type}$, as $\lambda_{\source, \type}^{\p} \geq 0.$ } Let 
$
P_\SR =\sum_{\source \in \sources, \type \in \types}|\paths_{\source, \type}|
$
be the total number of paths. We refer to the vector
\begin{equation}
\label{eq: lambda}
    \vc{\lambda} =  [\lambda_{\source, \type}^{\p}]_{\source\in\sources, \type \in \types, \p \in \paths_{\source, \type}} \in \reals_+, ^{P_\SR}
\end{equation}
as the global rates allocation. To satisfy multicast link capacity constraints, for each link $e\in\edges$, we must have
\begin{equation}
\label{cons: link capacity}
    \sum_{\source \in \sources, \type \in \types} \max_{\p \in \paths_{\source, \type}: e \in \p} \lambda_{\source, \type}^{ \p} \leq \mu^e.
\end{equation}
Note that only data pairs of the same type generated and same source can be multicast together.
For learner $l$, we denote by 
\begin{equation}
\label{cons: learner}
    \lambda_{\source}^\learner = \lambda_{\source, \type^{\learner}}^{\p}
\end{equation}
the incoming traffic rate of type $\type^{\learner}$ at $\learner\in\learners$ from source $\source\in\sources$. Note that $\p \in \paths_{\source,\type^{\learner}}$ and $\learner$ is the last node of $\p$.
At source $\source$, for each $\type\in\types$, we have the constraints:
\begin{equation}
\label{cons: source}
 \max_{\p \in \paths_{\source, \type}} \lambda_{\source, \type}^{\p} \le \lambda_{\source,\type}.
\end{equation}
\blu{Note that the left hand sides of both constraints in \eqref{cons: link capacity} and \eqref{cons: source} are non-differentiable.} 
\CR{We adopt the following assumption on the network substrate (Asm.~1 in \cite{liu2022experimental})}:
\begin{ass}
\label{asm: poisson}
For $\vc{\lambda}\in\feasibleset$, the system is stable and, in steady state, pairs $(\x,y)\in\reals^{d}\times \reals$ of type $t^\learner$ arrive at learner $\learner\in \learners$ according to $|\sources|$ independent Poisson processes with rate $\lambda^\learner_{\source}$.
\end{ass}
\CR{As discussed in \cite{liu2022experimental}}, this is satisfied if, e.g., the network is a Kelly network~\cite{kelly2011reversibility} where Burke's theorem holds~\cite{gallager2013stochastic}.

\noindent\textbf{D-Optimal Design Objective.} The so-called D-optimal design objective~\cite{boyd2004convex} for learner $\learner$ is given by:
\blu{\begin{equation}
\label{eq: Gaussian D-optimal}
   G^{\learner}(\vc{X}^\learner, \vc{n}^\learner
   ) = \log\det (\cov(\hat{\vc{\beta}}_\map^{\learner}-\vc{\beta}_{\type^\learner})), 
\end{equation}}%
\blu{where the covariance is given by  Eq.~\eqref{eq:cov}. As the latter summarises the expected prediction error in all directions, minimizing the $\log \det$ (i.e., the sum of logs of eigenvalues of the covariance) imposes an overall bound on  this error. 
}

\noindent\textbf{Aggregate Expected Utility Optimization.}
Under Asm. \ref{asm: poisson}, the arrivals of pertinent data pairs at learner $\learner$ from source $\source$ form a Poisson process with rate $\lambda_{\source}^\learner$. The 
PMF of arrivals is:
\begin{align}
\label{eq:Poisson prod}
    \prob[\vc{n}^\learner=\vc{n}] = \prod_{\source\in\sources} \frac{(\lambda_{\source}^\learner T)^{n_{\source}^\learner}  e^{-\lambda_{\source}^\learner T}}{n_{\source}^\learner!}, 
\end{align}
for all $\vc{n}=[n_{\source}]_{\source\in \sources}\in\naturals^{|\sources|}$ and $\learner\in\learners$. Then, the PDF (probability distribution function) of features is:
\begin{equation}
\label{eq:Gaussian prod}
    f(\vc{X}^{\learner} = \vc{X}) = \prod_{\source\in\sources} \prod_{i=1}^{n_{\source}^{\learner}} \frac{1}{(2\pi)^{\frac{d}{2}} \sqrt{|\vc{\Sigma}_{\source}|}} e^{-\frac{1}{2} \x_{\source,i}^\top \vc{\Sigma}_{\source}^{-1} \x_{\source,i}},
\end{equation}
for all $\vc{X} = [[ \x_{\source,i} ]_{i = 1}^{n_{\source}}]_{\source \in \sources}\in\reals^{\sum_{\source} n_{\source}}$ and $\learner\in\learners$. We define the utility at learner $\learner\in\learners$ as its  expected D-optimal design objective, namely:
\begin{align*}
\begin{split}
    & U^\learner (\vc{\lambda}^\learner) = \mathbb{E}_{\vc{n}^\learner}\big[\mathbb{E}_{\vc{X}^\learner} [G^\learner(\vc{X}^\learner,\vc{n}^\learner)|\vc{n}^\learner]\big] \\
    & = \sum_{\vc{n}\in\naturals^{|\sources|}}  \prob[\vc{n}^\learner=\vc{n}] \int_{\vc{X}\in\reals^{\sum_{\source} n_{\source}}} G^{\learner}(\vc{X}, \vc{n}) f(\vc{X}^{\learner} = \vc{X}) \mathrm{d} \vc{X},
\end{split}
\end{align*}
where $\vc{\lambda}^\learner = [\lambda_{\source}^\learner]_{\source \in \sources}$, \blu{and D-optimal design objective $ G^{\learner}$ is given by Eq.~\eqref{eq: Gaussian D-optimal}}. We wish to solve the following problem:
\begin{subequations}
\label{prob: utility}
\begin{align}
    \text{Maximize:} \quad &U(\vc{\lambda}) 
    = \sum_{\learner\in\learners}(U^\learner(\vc{\lambda}^\learner) - U^\learner(\mathbf{0})),\label{eq: objective function} \\
    \text{s.t.} \quad &\vc{\lambda} \in \feasibleset,
\end{align}
\end{subequations}
where $U^\learner(\mathbf{0})$ is a lower bound\footnote{This is added to ensure the non-negativity of the objective, which is needed to state guarantees in terms of an approximation ratio (c.f.~Thm.~\ref{thm: FW}) \cite{liu2022experimental}.} for $U^\learner(\vc{\lambda}^\learner)$ and feasible set $\feasibleset$ is defined by constraints \eqref{eq: lambda}-\eqref{cons: source}. Note that the feasible set is a down-closed convex set \cite{bian2017guaranteed}. However, this problem is not convex, as the objective is non-concave.

\section{Centralized Algorithm}
\label{sec: centralized algorithm}
In this section, we propose a centralized polynomial-time algorithm with a new gradient estimation, as required by the presence of Gaussian sources, to solve Prob. \eqref{prob: utility}. By establishing submodularity, we achieve an optimality guarantee of $1-1/e$.

\subsection{DR-submodularity}
To solve this non-convex problem, we utilize a key property here: \emph{diminishing-returns submodularity}, defined as follows:
\begin{definition} 
[DR-Submodularity \cite{bian2017guaranteed,soma2015generalization}]
\label{def: submodular}
A function $f: \naturals^p \to \reals$ is called \emph{diminishing-returns (DR) submodular} iff for all $\x, \vc{y} \in \naturals^p$ such that $\x\leq \vc{y}$ and all $k\in \naturals$,
\begin{equation}
\label{eq:drsub}
    f(\x + k \vc{e}_j) - f(\x) \geq f(\vc{y} + k\vc{e}_j) -f(\vc{y}),
\end{equation}
for all $j=1,\ldots,p$, where $\vc{e}_j$ is the $j$-th standard basis vector. 
Moreover, if Eq.~\eqref{eq:drsub} holds for a real valued function $f: \reals^p_+ \to \reals$ for all  $\x, \vc{y} \in \reals^p$ s.t. $\x\leq \vc{y}$ and all $k\in \reals_+$, the function is called \emph{continuous DR-submodular}. 
\end{definition}
The following theorem establishes that objective \eqref{eq: objective function} is a continuous DR-submodular function. 
\begin{theorem}
\label{thm: submodular}
Objective $U (\vc{\lambda})$ is (a) monotone-increasing and (b) continuous DR-submodular with respect to $\vc{\lambda}$. Moreover, the partial derivative of $U$ is:
\begin{equation}
\label{eq: derivative}
    \frac{\partial U}{\partial \lambda_{\source, \type}^{\p}} = \sum_{n=0}^{\infty} \Delta_{\source}^{\learner}(\vc{\lambda}^\learner,n) \cdot \prob[n_{\source}^{\learner}= n] \cdot T,
\end{equation}
where $\type = \type^{\learner}$, $\learner$ is the last node of $\p$, the distribution $\mathrm{P}$ is Poisson described by \eqref{eq:Poisson prod}, with parameters governed by $\lambda_{\source}^{\learner} T$, and
\begin{equation}
\label{eq: delta}
\begin{split}
    \Delta_{\source}^{\learner}(\vc{\lambda}^\learner,n) = \mathbb{E}_{\vc{n}^{\learner}}\left[\mathbb{E}_{\vc{X}^{\learner}} [G^{\learner}(\vc{X}^{\learner},\vc{n}^{\learner})|\vc{n}^{\learner}]|n_{\source}^{\learner}= n+1\right] \\
    - \mathbb{E}_{\vc{n}^{\learner}}\left[\mathbb{E}_{\vc{X}^{\learner}} [G(\vc{X}^{\learner},\vc{n}^{\learner})|\vc{n}^{\learner}]|n_{\source}^{\learner}= n\right].
\end{split}
\end{equation}
\end{theorem}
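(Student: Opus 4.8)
\emph{Proof plan.} The plan is to reduce the claim to a per-learner statement and then exploit the Poisson arrival structure. Since $U(\vc{\lambda})=\sum_{\learner\in\learners}(U^\learner(\vc{\lambda}^\learner)-U^\learner(\vc{0}))$ is separable across learners, and since by \eqref{cons: learner} each coordinate $\lambda_{\source,\type}^{\p}$ feeds exactly one learner's rate, namely $\lambda_{\source}^{\learner}=\lambda_{\source,\type^\learner}^{\p}$ with $\learner$ the terminal node of $\p$, it suffices to show that every $U^\learner$ is monotone-increasing and continuous DR-submodular in $\vc{\lambda}^\learner$; all cross-learner mixed second derivatives vanish and are hence non-positive. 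The central object is the lattice function $g^\learner(\vc{n}):=\expect{G^\learner(\vc{X}^\learner,\vc{n})\mid\vc{n}^\learner=\vc{n}}$, the D-optimal objective averaged over the Gaussian features, so that $U^\learner(\vc{\lambda}^\learner)=\sum_{\vc{n}}g^\learner(\vc{n})\,\prob[\vc{n}^\learner=\vc{n}]$ with $\prob$ the product-Poisson PMF \eqref{eq:Poisson prod} of rate $\lambda_{\source}^\learner T$.

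\emph{Derivative formula \eqref{eq: derivative}.} I would first differentiate $U^\learner$. Only the $\source$-th factor of \eqref{eq:Poisson prod} depends on $\lambda_{\source}^{\learner}$, and for a single Poisson PMF $P[n]=\tfrac{(\lambda T)^n e^{-\lambda T}}{n!}$ one has $\tfrac{\partial}{\partial\lambda}P[n]=T\,(P[n-1]-P[n])$ with $P[-1]\equiv 0$. Differentiating term by term and re-indexing the $P[n-1]$ sum (a discrete summation by parts) converts $P[n-1]-P[n]$ into the forward difference $g^\learner(\cdots,n+1)-g^\learner(\cdots,n)$; averaging this difference over the remaining coordinates $\vc{n}_{-\source}$ is exactly $\Delta_{\source}^{\learner}(\vc{\lambda}^\learner,n)$ of \eqref{eq: delta}, which yields \eqref{eq: derivative}.

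\emph{Lattice monotonicity and DR-submodularity.} The crux is to prove $g^\learner$ monotone-increasing and DR-submodular on $\naturals^{|\sources|}$. Writing the precision matrix as $M^\learner=\sum_{\source}\sum_{i=1}^{n_\source}\sigma_{\source,\type^\learner}^{-2}\vc{x}_{\source,i}\vc{x}_{\source,i}^\top+(\vc{\Sigma}_0^\learner)^{-1}$ (the inverse of the covariance in \eqref{eq:cov}), the matrix determinant lemma writes the pointwise gain in $\log\det M^\learner$ from one extra sample of source $\source$ as $\log\bigl(1+\sigma_{\source,\type^\learner}^{-2}\,\vc{x}^\top(M^\learner)^{-1}\vc{x}\bigr)\ge 0$, where $\vc{x}\sim N(\vc{0},\vc{\Sigma}_\source)$ is the fresh feature; non-negativity gives monotonicity. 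For diminishing returns I use that $M\mapsto\vc{x}^\top M^{-1}\vc{x}$ is non-increasing in the positive-semidefinite order. The new ingredient relative to \cite{liu2022experimental} is that features are now \emph{random}: to compare lattice points $\vc{n}\le\vc{m}$ I would couple the samples so that the $\vc{m}$-configuration contains the $\vc{n}$-samples plus independent extra ones, giving $M^\learner_{\vc{m}}\succeq M^\learner_{\vc{n}}$ almost surely while the fresh $\vc{x}$ is drawn identically. Taking expectations then shows the expected marginal gain $g^\learner(\vc{n}+\vc{e}_\source)-g^\learner(\vc{n})$ is non-increasing in $\vc{n}$ (DR-submodularity), and for $\source'\ne\source$ the analogous mixed difference of $g^\learner$ is $\le 0$ (submodularity).

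\emph{Lifting to $U^\learner$, and the main obstacle.} Monotonicity of $U$ is then immediate from \eqref{eq: derivative}, as each $\Delta_{\source}^{\learner}\ge 0$. For continuous DR-submodularity it suffices (Bian et al.~\cite{bian2017guaranteed}) that every Hessian entry of $U^\learner$ be non-positive; differentiating \eqref{eq: derivative} once more with the same Poisson identity expresses $\partial^2U^\learner/\partial(\lambda_\source^\learner)^2$ as a Poisson average of the second forward difference of $g^\learner$ in direction $\source$ (non-positive by lattice DR), $\partial^2U^\learner/\partial\lambda_\source^\learner\partial\lambda_{\source'}^\learner$ as a Poisson average of the mixed difference (non-positive by submodularity), and cross-learner entries as $0$. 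I expect the main obstacle to be the technical scaffolding rather than this clean scheme: one must show finiteness of $\expect{\log\det M^\learner}$ and of $\expect{\log(1+\sigma_{\source,\type^\learner}^{-2}\vc{x}^\top(M^\learner)^{-1}\vc{x})}$ for Gaussian features---so that $g^\learner$ is well defined and the coupling comparison is valid---and then exhibit summable/integrable dominating functions justifying differentiation under both the infinite sum over $\vc{n}$ and the integral over $\vc{X}$. These integrability bounds do not arise in the finite-support model of \cite{liu2022experimental} and are exactly where the Gaussian extension demands new estimates.
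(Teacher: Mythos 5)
Your proposal matches the paper's own proof in all essentials: the paper likewise (i) realizes arrivals as prefixes of infinite i.i.d.\ Gaussian sequences $\widetilde{\vc{X}^{\learner}}$ --- exactly the coupling you describe, which makes the precision matrices nested ($M_{\vc{m}}\succeq M_{\vc{n}}$ a.s.\ for $\vc{n}\le\vc{m}$), (ii) proves pointwise monotonicity and lattice DR-submodularity of $G^{\learner}$ via Sylvester's determinant identity applied to the one-sample marginal gain $\log\bigl(1+\sigma_{\source,\type^\learner}^{-2}\x^\top A(\vc{n})\x\bigr)$, and (iii) derives \eqref{eq: derivative} and the sign conditions on the Hessian of $U$ by the same term-by-term Poisson differentiation and re-indexing, with cross-learner entries vanishing. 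The only cosmetic difference is in how expectation over features is shown to preserve the lattice inequalities: you pass the almost-sure coupled inequality directly through the expectation, whereas the paper formalizes this with a truncation lemma (capping each coordinate via the projection $\proj_{n_0}$ and letting $n_0\to\infty$), which serves the same purpose as the integrability scaffolding you flag as the main technical obstacle.
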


The proof is in \arxiv{App.~\ref{app: proof_submodular}}{\CR{our technical report \cite{arxviv}}}. Obj.~\eqref{eq: objective function} contains two layers of expectations and a different D-optimal design objective from~\cite{liu2022experimental}. This is a consequence of the Gaussianity and heterogeneity of sources. In turn, this also requires a different argument in establishing  the continuous DR-submodularity.

\subsection{Algorithm Overview}
We follow the Frank-Wolfe variant for monotone continuous DR-submodular function maximization by Bian et al. \cite{bian2017guaranteed} and Liu et al. \cite{liu2022experimental}, but deviate in estimating the gradients of objective $U$. \CR{The proposed algorithm is summarized in Alg.~\ref{alg: FW}.}

\begin{algorithm}[t]
\caption{Frank-Wolfe Variant}
\label{alg: FW}
\LinesNumbered
\KwIn{ $U:\feasibleset\to\reals_+$,  $\feasibleset$, stepsize $\delta\in(0,1]$.}
$\vc{\lambda}^0=0, \eta=0, k=0$ \\
\While{$\eta<1$}{
    find direction $\vc{v}(k)$, s.t. $  \vc{v}(k) = \mathop{\arg\max}_{\vc{v}\in \mathcal{D }}\langle \vc{v},\widehat{\nabla U(\vc{\lambda}(k))}\rangle$ \\
    $\gamma_k = \min \{ \delta, 1-\eta \}$ \\
    $\vc{\lambda}(k+1) = \vc{\lambda}(k) + \gamma_k \vc{v}(k)$, $\eta=\eta+\gamma_k$, $k=k+1$
}
\Return $\vc{\lambda}(K)$
\end{algorithm}

\noindent{\textbf{Frank-Wolfe Variant.}} Starting from $\vc{\lambda}(0) = \vc{0}$, FW iterates:
\begin{subequations}
\label{FW}
\begin{align}
    & \vc{v}(k) = \mathop{\arg\max}_{\vc{v}\in \mathcal{D }}\langle \vc{v},\widehat{\nabla U(\vc{\lambda}(k))} \rangle, \label{FW_1} \\
    & \vc{\lambda}(k+1) = \vc{\lambda}(k) + \gamma \vc{v}(k), \label{FW_2}
\end{align}
\end{subequations}
where $\widehat{\nabla U(\cdot)}$ is an estimator of the gradient $\nabla U$, and $\gamma$ is an appropriate stepsize. We will further discuss how to estimate the gradient in Sec.~\ref{sec: gradient estimation}.
This algorithm achieves a $1-\frac{1}{e}$ approximation guarantee, characterized by:

\begin{theorem}
\label{thm: FW}
Let $\lambda_\mathrm{MAX} = \max_{\vc{\lambda}\in\mathcal{D}}\|\vc{\lambda}\|_1$. Then, for any $0<\epsilon_0,\epsilon_1<1$, there exists $K = O(\frac{\epsilon_0}{P_\SR  (|\sources|-1)} \epsilon_1)$, $n'=O( \lambda_\mathrm{MAX} T + \ln \frac{1}{\epsilon_1})$, $N_1=N_2=\Omega(\sqrt{\ln \frac{P_\SR K}{\epsilon_0} \cdot (n'+1)}TK)$, s.t., the FW variant algorithm terminates in $K$ iterations, and uses $n'$ terms in the sum, $N_1$ samples for $\vc{n}$, and $N_2$ samples for $\vc{X}$ in estimator \eqref{eq: estimator}. Thus, with probability greater than $1-\epsilon_0$, the output solution $\vc{\lambda}(K) \in \mathcal{D}$ \CR{of Alg.~\ref{alg: FW} }satisfies:
\begin{equation}\label{eq: final guarantee}
    U(\vc{\lambda}(K)) \ge (1 - e^{\epsilon_1-1})  \max_{\vc{\lambda}\in\mathcal{D}}U(\vc{\lambda}) - \epsilon_2,
\end{equation}
where $\epsilon_2$ determined by $\epsilon_0,\epsilon_1$ and network parameters: $\epsilon_2= (T^2 P_\SR\lambda_{\mathrm{MAX}}^2 + 2 \lambda_{\mathrm{MAX}}) \frac{1}{K} \max_{\learner \in \learners, \source \in \sources} \log\left(1 + \frac{\lambda_{\mathrm{MAX}}(\vc{\Sigma}_0^\learner)c_{\source}^2}{\sigma_{\source,\type}^2} \right)>0$, $c_\source = 4 \sqrt{\lambda_{\mathrm{MAX}}(\vc{\Sigma}_\source)} \sqrt{d} + 2 \sqrt{\lambda_{\mathrm{MAX}}(\vc{\Sigma}_\source)} \sqrt{\log \frac{1}{\delta}}$, and $\delta = O\left(\frac{\epsilon_0}{P_\SR K |\sources|n'}\right)$.
\end{theorem}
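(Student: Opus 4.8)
The plan is to follow the Frank–Wolfe analysis for monotone continuous DR-submodular maximization over a down-closed convex set due to Bian et al.~\cite{bian2017guaranteed}, but to track carefully the two distinct sources of error that prevent us from attaining the exact $1-1/e$ guarantee: the \emph{discretization} error from taking $K$ finite steps of size $\gamma_k=\delta=1/K$, and the \emph{gradient estimation} error incurred because line~3 of Alg.~\ref{alg: FW} maximizes against the estimator $\widehat{\nabla U}$ rather than the true gradient $\nabla U$. Throughout, let $\vc{\lambda}^\star$ denote an optimal solution, i.e. $\vc{\lambda}^\star \in \arg\max_{\vc{\lambda}\in\mathcal{D}} U(\vc{\lambda})$.

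First I would establish the per-iteration recursion. By Thm.~\ref{thm: submodular}, $U$ is monotone-increasing and continuous DR-submodular, and I would bound the spectral norm of its Hessian by some $L$; the $\log\det$ structure of \eqref{eq: Gaussian D-optimal}, combined with a bound on the Gaussian features, yields the $\log\!\left(1+\cdots\right)$ expression that appears in $\epsilon_2$. Using $L$-smoothness along the update \eqref{FW_2},
\[
U(\vc{\lambda}(k+1))-U(\vc{\lambda}(k)) \ge \gamma_k\langle \vc{v}(k),\nabla U(\vc{\lambda}(k))\rangle - \frac{L}{2}\gamma_k^2\|\vc{v}(k)\|_2^2 .
\]
Since $\vc{v}(k)$ maximizes $\langle\cdot,\widehat{\nabla U}\rangle$ and the feasible set satisfies $\|\vc{\lambda}\|_1\le\lambda_\mathrm{MAX}$, I would relate the inner product against the true gradient to the optimum via $\langle \vc{v}(k),\nabla U(\vc{\lambda}(k))\rangle \ge \langle \vc{\lambda}^\star,\nabla U(\vc{\lambda}(k))\rangle - 2\lambda_\mathrm{MAX}\,\|\widehat{\nabla U}-\nabla U\|_\infty$, and then invoke the monotone DR-submodular inequality $\langle \vc{\lambda}^\star,\nabla U(\vc{\lambda}(k))\rangle \ge U(\vc{\lambda}^\star)-U(\vc{\lambda}(k))$.

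Next -- and this is the crux -- I would bound the gradient estimation error $\|\widehat{\nabla U}-\nabla U\|_\infty$ with high probability, which is where $n', N_1, N_2, \delta$ are set. The per-coordinate estimator of \eqref{eq: derivative} incurs three errors: (i) truncating the infinite sum over $n$ at $n'$ terms, controlled by a Poisson tail bound on $n_\source^\learner$ and giving $n'=O(\lambda_\mathrm{MAX}T+\ln\frac{1}{\epsilon_1})$; (ii) Monte-Carlo error in the outer expectation over $\vc{n}^\learner$, controlled by a Chernoff/Hoeffding bound over $N_1$ samples; and (iii) Monte-Carlo error in the inner expectation over the Gaussian features $\vc{X}^\learner$, over $N_2$ samples. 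The delicate point, absent in Liu et al.~\cite{liu2022experimental} since their features lie in a finite set, is that $\Delta_\source^\learner$ is a priori unbounded because Gaussian features can be arbitrarily large; I would therefore condition on the high-probability event $\{\|\x_{\source,i}\|\le c_\source\}$, obtaining $c_\source$ from a sub-Gaussian/chi-square tail bound, and show that the per-sample marginal gain of the $\log\det$ is then bounded by $\log\!\left(1+\lambda_{\mathrm{MAX}}(\vc{\Sigma}_0^\learner)c_\source^2/\sigma_{\source,\type}^2\right)$. This bounded-summand property feeds the concentration inequalities of (ii)--(iii) and fixes $N_1,N_2=\Omega(\sqrt{\ln(P_\SR K/\epsilon_0)(n'+1)}\,TK)$.

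Finally I would take a union bound over the $P_\SR$ coordinates, the $K$ iterations, and the $O(|\sources|n')$ estimation sub-events, setting the per-event failure probability to $\delta=O(\epsilon_0/(P_\SR K|\sources|n'))$ so the total failure probability is at most $\epsilon_0$. On the complementary event I would unroll the per-iteration recursion over $k=0,\ldots,K-1$ with $\gamma_k=1/K$: telescoping the linear terms produces the geometric factor $(1-1/K)^K\to e^{-1}$, where the accumulated gradient-error slack $2\lambda_\mathrm{MAX}\|\widehat{\nabla U}-\nabla U\|_\infty$ (tied to $\epsilon_1$ through the truncation level $n'$) sharpens $e^{-1}$ into $e^{\epsilon_1-1}$, yielding the factor $1-e^{\epsilon_1-1}$, while the curvature terms $\frac{L}{2}\gamma_k^2$ summed over $K$ steps contribute the additive $\epsilon_2=O(L/K)$. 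I expect the main obstacle to be step (iii): controlling the bias introduced by truncating the unbounded Gaussian features while keeping the radius $c_\source$ small enough that the induced additive error $\epsilon_2$ stays controlled, precisely because this is where the finite-set analysis of \cite{liu2022experimental} does not transfer and a fresh estimation bound is required.
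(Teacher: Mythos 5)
Your proposal follows essentially the same route as the paper's proof: a Frank--Wolfe recursion in the style of Bian et al.~\cite{bian2017guaranteed}, a per-coordinate gradient-estimation bound obtained by Poisson-tail truncation at $n'$, Chernoff bounds for the $N_1,N_2$ Monte-Carlo layers, and a sub-Gaussian norm bound giving the radius $c_\source$ (the paper's Lems.~\ref{lem: HEAD bound2}--\ref{lem: HEAD bound4}), followed by exactly your union bound over the $P_\SR$ coordinates, $K$ iterations, and $O(|\sources|n')$ conditioning sub-events with $\delta=O\big(\tfrac{\epsilon_0}{P_\SR K|\sources|n'}\big)$; the paper merely imports the unrolling step as Lemma~2 of \cite{liu2022experimental} rather than re-deriving it from smoothness as you do. One piece of bookkeeping you should adopt from the paper: the truncation error is one-sided and \emph{relative}, $\mathrm{HEAD}_{\source,\type}^{\p}\ge\big(1-\prob[n^\learner_{\source}\ge n'+1]\big)\,\partial U/\partial\lambda_{\source,\type}^{\p}$, so it must enter the per-iteration linear-maximization guarantee as a multiplicative factor $a=1-\mathrm{P}_\mathrm{MAX}\ge 1-\epsilon_1$ (the paper's Lem.~\ref{lem: estimation guarantee}), which is precisely what degrades the ratio to $1-e^{\epsilon_1-1}$; your per-iteration inequality as written folds everything into the additive slack $2\lambda_\mathrm{MAX}\|\widehat{\nabla U}-\nabla U\|_\infty$, which after unrolling would leave the factor at $1-1/e$ and instead push a truncation-bias term (scaled by a uniform gradient bound) into $\epsilon_2$, so your closing claim that the $\epsilon_1$-slack ``sharpens'' the exponent is the right outcome but only materializes with the multiplicative treatment.
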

The proof is in \arxiv{App.~\ref{app: proof_FW}}{\CR{our technical report \cite{arxiv}}}. Our algorithm is based  on the Frank-Wolfe variant from Bian et al. \cite{bian2017guaranteed}. Similar to Thm.~2 in \cite{liu2022experimental}, our guarantee involves gradient estimation through truncating and sampling, due to Poisson  arrivals (see Asm.~\ref{asm: poisson}). However, incorporating Gaussian sources, we need to also sample from the Gaussian distribution to ensure a polynomial-time estimator. This requires combining a sub-Gaussian norm bound \cite{Lectures36709S19} with the aforementioned truncating and sampling techniques. 

\subsection{Gradient Estimation}
\label{sec: gradient estimation}
We describe here how to produce an unbiased, polynomial-time estimator of our gradient $\widehat{\nabla U}$, which is accessed by Eq.~\eqref{FW_1}.
There are three challenges in computing the true gradient \eqref{eq: derivative}: (a) the outer sum involves infinite summation over $n_{\source}^{\learner} \in \naturals$; (b) the outer expectation involves an exponential sum in $|\sources|-1$; and (c) the inner expectation involves an exponential sum in $\|\vc{n}^{\learner}\|$. The last arises from Gaussian sources, which differs from gradient estimation in~\cite{liu2022experimental}; this requires the use of a different bound (see Lem.~\arxiv{\ref{lem: HEAD bound3} in the Appendix}{\CR{xxx in our technical report~\cite{arxiv}}}), as well as a decoupling argument (as features and arrivals are jointly distributed). 

To address these challenges, we (a) truncate the infinite summation while maintaining the quality of estimation through a Poisson tail bound: 
\begin{equation}
\label{eq: HEAD}
 \mathrm{HEAD}_{\source, \type}^{\p}(n') = \sum_{n=0}^{n'} \Delta_{\source}^{\learner}(\vc{\lambda}^\learner,n) \cdot \prob[n_{\source}^{\learner}= n] \cdot T,
\end{equation}
where $\type = \type^{\learner}$, $l$ is the last node of $\p$, $\Delta_{\source}^{\learner}(\vc{\lambda}^\learner,n)$ is defined in Eq.~\eqref{eq: delta}, and $n'$ is the truncating parameter.
We then (b) sample $\vc{n}^{\learner}$ ($N_1$ samples) according to the Poisson distribution, parameterized by $\vc{\lambda}^{\learner} T$; and (c) sample $\vc{X}^{\learner}$ ($N_2$ samples) according to the Gaussian distribution. When $n'\geq \lambda_{\source}^{\learner} T$, we estimate the gradient by polynomial-time sampling:
\begin{align}
\label{eq: estimator}
  \widehat{\frac{\partial U}{\partial \lambda_{\source, \type}^{\p}}} =  \sum_{n=0}^{n'}    \widehat{\Delta_{\source}^{\learner}(\vc{\lambda}^\learner,n)} \cdot \prob[n_{\source}^{\learner}= n] \cdot T,
\end{align}
where
\begin{equation*}
\begin{split}
     \widehat{\Delta_{\source}^{\learner}(\vc{\lambda}^\learner,n)} = \frac{1}{N_1 N_2} \sum_{j=1}^{N_1} \sum_{k=1}^{N_2} (G^{\learner}(\vc{X}^{\learner,j,k},\vc{n}^{\learner,j}|_{n_{\source}^{\learner,j} = n+1}) - \\
    G^{\learner} (  \vc{X}^{\learner,j,k},\vc{n}^{\learner,j}|_{n_{\source}^{\learner,j} = n})),
\end{split}
\end{equation*}
$\vc{n}^{j}|_{n_{\source}^{\learner,j} = n}$ indicates vector $\vc{n}^{j}$ with $n_{\source}^{\learner,j} = n$, and $N_1$, $N_2$ are sampling parameters. At each iteration, we generate $N_1$ samples $\vc{n}^{\learner,j}$, $j = 1,\dots,N_1$ of the random vector $\vc{n}^\learner$ according to the Poisson distribution in Eq.~\eqref{eq:Poisson prod}, parameterized by the current solution vector $\vc{\lambda}^\learner T$. Having a sample $\vc{n}^{\learner,j}$, we could sample $N_2$ samples $\vc{X}^{\learner,j,k}$, $k = 1,\dots,N_2$ of random matrix $\vc{X}^{\learner,j} = [[\x^{\learner}_{\source,i}]_{i=1}^{n^{\learner,j}_{\source}}]_{\source \in \sources}$ according to the Gaussian distribution in Eq.~\eqref{eq:Gaussian prod}.
We bound the distance between the estimated and true gradient as follows:
\begin{lemma}
\label{lem: HEAD bound4}
For any $\delta \in (0,1)$, and $n'\geq \lambda_{\source}^{\learner} T$,
\begin{equation*}
\begin{split}
  - \gamma \max_{\learner \in \learners, \source \in \sources} \log(1 + \frac{\lambda_{\mathrm{MAX}}(\vc{\Sigma}_0^\learner)c_{\source}^2}{\sigma_{\source,\type}^2} )  \le \frac{\partial U}{\partial\lambda_{\source, \type}^{\p}} - \widehat{\frac{\partial U}{\partial\lambda_{\source, \type}^{\p}} } \le \\
 \gamma \max_{\learner \in \learners, \source \in \sources} \log(1 + \frac{\lambda_{\mathrm{MAX}}(\vc{\Sigma}_0^\learner)c_{\source}^2}{\sigma_{\source,\type}^2} ) + \prob[n^\learner_{\source}\geq n'+1] \frac{\partial U}{\partial\lambda_{\source, \type}^{\p}},
\end{split}
\end{equation*}
with probability greater than $1-2\cdot e^{-\gamma^2 N_1 N_2/2T^2(n'+1)} - |\sources|n'\delta - (|\sources|-1)\delta_{\source, \type}^{\p}$, where $\lambda_{\mathrm{MAX}}(\vc{\Sigma}_0^\learner)$ is the maximum eigenvalue of matrix $\vc{\Sigma}_0^\learner$.
\end{lemma}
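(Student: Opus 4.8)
The plan is to insert the truncated quantity $\mathrm{HEAD}_{\source,\type}^{\p}(n')$ of Eq.~\eqref{eq: HEAD} as an intermediate term and split the total error into a \emph{truncation} part and a \emph{sampling} part,
\begin{equation*}
\frac{\partial U}{\partial\lambda_{\source,\type}^{\p}} - \widehat{\frac{\partial U}{\partial\lambda_{\source,\type}^{\p}}}
= \Big(\frac{\partial U}{\partial\lambda_{\source,\type}^{\p}} - \mathrm{HEAD}_{\source,\type}^{\p}(n')\Big)
+ \Big(\mathrm{HEAD}_{\source,\type}^{\p}(n') - \widehat{\frac{\partial U}{\partial\lambda_{\source,\type}^{\p}}}\Big),
\end{equation*}
and to bound each part separately, combining them at the end via a union bound. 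The first parenthesis is a deterministic tail; the second is a mean-zero (up to conditioning bias) sampling fluctuation.

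For the truncation part I would invoke Theorem~\ref{thm: submodular}. Monotonicity of $U$ gives $\Delta_{\source}^{\learner}(\vc{\lambda}^\learner,n)\ge 0$, so the tail $\sum_{n>n'}\Delta_\source^\learner(\vc{\lambda}^\learner,n)\,\prob[n_\source^\learner=n]\,T$ is nonnegative; this is precisely why the truncation contributes nothing to the \emph{lower} bound. For the \emph{upper} bound, continuous DR-submodularity makes $\Delta_\source^\learner(\vc{\lambda}^\learner,n)$ nonincreasing in $n$, and a ratio argument — comparing $\Delta$ at indices $n>n'$ (all $\le\Delta_\source^\learner(\vc{\lambda}^\learner,n')$) against indices $n\le n'$ (all $\ge\Delta_\source^\learner(\vc{\lambda}^\learner,n')$) — yields $\sum_{n>n'}\Delta p_n\le\prob[n_\source^\learner\ge n'+1]\sum_{n\ge0}\Delta p_n$, i.e. the truncation is at most $\prob[n^\learner_{\source}\ge n'+1]\,\partial U/\partial\lambda_{\source,\type}^{\p}$, which is exactly the extra term appearing in the statement.

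The sampling part is where the Gaussian sources enter and is the main obstacle. Each summand of $\widehat{\Delta_\source^\learner(\vc{\lambda}^\learner,n)}$ is a marginal gain $G^{\learner}(\vc{X},\vc{n}|_{n_\source=n+1})-G^{\learner}(\vc{X},\vc{n}|_{n_\source=n})$, which by Sylvester's / the matrix-determinant identity equals $\log(1+\sigma_{\source,\type}^{-2}\,\x^\top A^{-1}\x)$ for a precision matrix $A\succeq(\vc{\Sigma}_0^\learner)^{-1}$; hence it is nonnegative and at most $\log(1+\lambda_{\mathrm{MAX}}(\vc{\Sigma}_0^\learner)\|\x\|^2/\sigma_{\source,\type}^2)$. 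Because $\x\sim N(\vc{0},\vc{\Sigma}_\source)$, I would then invoke the sub-Gaussian norm bound (Lem.~\ref{lem: HEAD bound3} and \cite{Lectures36709S19}) to show $\|\x\|\le c_\source$ with probability at least $1-\delta$, with $c_\source$ exactly the constant in the statement. On the intersection of these norm events, every summand lies in $[0,B]$ with $B=\max_{\learner,\source}\log(1+\lambda_{\mathrm{MAX}}(\vc{\Sigma}_0^\learner)c_\source^2/\sigma_{\source,\type}^2)$, the quantity multiplying $\gamma$ in the lemma.

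Finally, conditioned on this high-probability norm event, each $\widehat{\Delta_\source^\learner(\vc{\lambda}^\learner,n)}$ is an empirical mean of $N_1N_2$ terms bounded in $[0,B]$, so a Hoeffding inequality applied to the Poisson-weighted truncated sum of $n'+1$ terms delivers the two-sided deviation $\pm\gamma B$ with probability at least $1-2e^{-\gamma^2N_1N_2/2T^2(n'+1)}$. The residual failure mass is assembled by a union bound: the $|\sources|n'\delta$ term from controlling the inner Gaussian feature norms across all sources and truncation levels, and the $(|\sources|-1)\delta_{\source,\type}^{\p}$ term from bounding the outer arrivals over the remaining $|\sources|-1$ sources (challenge (b) of Sec.~\ref{sec: gradient estimation}). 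The delicate part is the probability bookkeeping: conditioning on the norm event injects a small bias that must be absorbed into the failure probability rather than the deterministic $\pm\gamma B$ envelope, and the jointly-distributed features and arrivals force a decoupling of the nested expectations of Eq.~\eqref{eq: delta} before Hoeffding can be applied coordinatewise.
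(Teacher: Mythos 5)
Your proposal is correct and follows essentially the same route as the paper's proof: the same decomposition through $\mathrm{HEAD}_{\source,\type}^{\p}(n')$ (whose two-sided truncation bound the paper simply imports as Lemma~4 of \cite{liu2022experimental}, with exactly the monotonicity/DR-submodularity content you sketch), the same Sylvester-identity bound $\log\bigl(1+\lambda_{\mathrm{MAX}}(\vc{\Sigma}_0^\learner)\|\x\|^2/\sigma_{\source,\type}^2\bigr)$ on each marginal gain, the same sub-Gaussian norm bound giving $c_\source$, a Chernoff bound (Thm.~A.1.16 of \cite{alon2004probabilistic}, equivalent to your Hoeffding step) for the $\pm\gamma B$ deviation over the $N_1N_2(n'+1)$ bounded summands, and the same law-of-total-probability/Bernoulli bookkeeping yielding the $|\sources|n'\delta$ and $(|\sources|-1)\delta_{\source,\type}^{\p}$ failure terms. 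The conditioning-bias subtlety you flag is real but is likewise glossed over in the paper, which states boundedness of the feature norms as a hypothesis in its Lemma~\ref{lem: HEAD bound2} and absorbs the conditioning entirely into the failure probability.
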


\fussy
The proof is in \arxiv{App.~\ref{app: proof_HEAD bound4}}{\CR{our technical report \cite{arxiv}}}. Combining this estimated gradient in Eq.~\eqref{eq: estimator} with classic Frank-Wolfe variant \cite{bian2017guaranteed}, we propose Alg.~\eqref{FW} and establish Thm.~\ref{thm: FW}.

\section{Distributed Algorithm}
\label{sec: distributed algorithm}
Implementing Alg. \eqref{FW} in our distributed learning network is hard, as it requires the full knowledge of the network.
We thus present our distributed algorithm for solving Prob. \eqref{prob: utility}. The algorithm performs a \emph{primal dual gradient algorithm} over a modified Lagrangian to effectively find direction $\vc{v}$, defined in Eq.~\eqref{FW_1}, in a distributed fashion. The linearity of Eq.~\eqref{FW_1} ensures convergence, while Thm.~\ref{thm: FW} ensures the aggregate utility attained in steady state is within an $1 - \frac{1}{e}$ factor from the optimal. 

\subsection{Algorithm Overview}
Solving Prob.~\eqref{prob: utility} in a distributed fashion requires decentralizing Eqs.~\eqref{FW_1} and \eqref{FW_2}. Decentralizing the latter is easy, as Eq. \eqref{FW_2} can be executed across sources via:
\begin{align}
    \label{eq: update2}
    \lambda_{\source, \type}^{\p}(k + 1) = \lambda_{\source, \type}^{\p}(k) + \gamma v_{\source, \type}^{\p}(k),
\end{align}
across all $\source \in \sources$, and for all $\type \in \types$, $\p \in \paths_{\source, \type}$. 
We thus turn our attention to decentralizing Eq.~\eqref{FW_1}.

Eq.~\eqref{FW_1} is a linear program. Standard primal-dual distributed algorithms (see, e.g.,~\cite{srikant2004mathematics, lun2006minimum}) typically require strictly concave objectives, as they otherwise would yield to 
harmonic oscillations and not converge to an optimal point~\cite{feijer2010stability} in linear programs. An additional challenge arises from the multicast constraints in Eqs.~\eqref{cons: link capacity} and \eqref{cons: source}: the $\max$ function is non-differentiable. The maximum could be replaced by a set of multiple inequality constraints, but this approach does not scale well, introducing a new dual variable per additional constraint.

To address the first challenge, we follow Feijer and Paganini~\cite{feijer2010stability} and replace constraints of the form $u\leq 0$ with $\phi(u)\leq 0$, where $\phi(u) = e^u - 1$. In order to obtain a scalable differentiable Lagrangian, we use the approach in  \cite{srikant2004mathematics,lun2006minimum}: we replace the multicast constraints \eqref{cons: link capacity} and \eqref{cons: source} by 
\begin{equation}
\label{cons: link capacity approx}
\sum_{\source \in \sources, \type \in \types}\left( \sum_{\p \in \paths_{\source, \type}: e \in \p} (v_{\source,\type}^{\p})^\theta \right)^{\frac{1}{\theta}} \le \mu^e,
\end{equation}
for each link $e\in\edges$, and
\begin{equation}
\label{cons: source approx}
     \left( \sum_{\p \in \paths_{\source, \type}: e \in \p} (v_{\source,\type}^{\p})^\theta \right)^{\frac{1}{\theta}} \le \lambda_{\source, \type},
\end{equation}
for each source $\source\in \sources$ and each type $\type \in \types$. Note that this is tantamount to approximating $\|\cdot\|_\infty$ with $\|\cdot\|_\theta$. 
Combining these two approaches together, the Lagrangian for the modified problem is:
\begin{equation*}
\begin{split}
    & L(\vc{v}, \vc{q},\vc{r}, \vc{u}) = \langle \vc{v}, \widehat{\nabla U(\vc{\lambda})} \rangle - \sum_{e \in \edges} q_e ( e^{ g_e(\vc{v}) } -1 ) -
    \sum_{\source \in \sources, \type \in \types} \\
    & r_{\source, \type} ( e^{ g_{\source, \type} }(\vc{v}) - 1 ) - \sum_{\source \in \sources, \type \in \types} \sum_{\p \in \paths_{\source, \type}} u_{\source,\type}^{\p} ( e^{ g_{\source, \type}^{\p}(\vc{v}) } -1 ),
\end{split}
\end{equation*}
where 
\begin{align*}
   & g_{e}(\vc{v}) = \sum_{\source \in \sources, \type \in \types} \left( \sum_{\p \in \paths_{\source, \type}: e \in \p} (v_{\source,\type}^{\p})^\theta \right)^{\frac{1}{\theta}} - \mu^e, \\
   & g_{\source, \type}(\vc{v}) = \sum_{\p \in \paths_{\source, \type}} \left( \sum_{\p \in \paths_{\source, \type}: e \in \p} (v_{\source,\type}^{\p})^\theta \right)^{\frac{1}{\theta}} - \lambda_{\source, \type}, \\
   & g_{\source, \type}^{\p}(\vc{v}) = -v_{\source, \type}^{\p},
\end{align*}
and $\vc{q} = [q_e]_{e \in \edges}$, $\vc{r} = [r_{\source, \type}]_{\source \in \sources, \type \in \types}$, and $\vc{u} = [u_{\source,\type}^{\p}]_{\source \in \sources, \type \in \types, \p\in\paths_{\source, \type}}$ are non-negative dual variables. Intuitively, $L$ penalizes the infeasibility of network constraints. 

We apply a primal dual gradient algorithm over this modified Lagrangian to decentralize  Eq.~\eqref{FW_1}. 
In particular, at iteration $\tau + 1$, the primal variables are adjusted via \blu{gradient ascent}:
\begin{equation}
\label{eq: primal_update}
    v_{\source, \type}^{\p}(\tau+1) = v_{\source, \type}^{\p}(\tau) + m_{\source, \type}^{\p} \nabla_{v_{\source, \type}^{\p}} L(\tau),
\end{equation}
and the dual variables are adjusted via \blu{gradient descent}:
\begin{subequations}
\label{eq: dual_update}
\begin{align}
    & q_e (\tau + 1) = q_e(\tau) - k_e ( \nabla_{q_e} L(\tau) )_{q_e(\tau)}^+, \\
    & r_{\source, \type} (\tau+1) = r_{\source, \type}(\tau) - h_{\source, \type} ( \nabla_{r_{\source, \type}} L(\tau) )_{q_e(\tau)} )_{r_{\source, \type}(\tau)}^+, \\
    & u_{\source,\type}^{\p}(\tau+1) = u_{\source,\type}^{\p}(\tau) - w_{\source,\type}^{\p} ( \nabla_{u_{\source,\type}^{\p}} L(\tau) )_{q_e(\tau)} )_{u_{\source,\type}^{\p}(\tau)}^+,
\end{align}
\end{subequations}
for each edge $e \in \edges$, source $\source \in \sources$, type $\type \in \types$, and path $\p\in\paths_{\source, \type}$, where $k_e > 0$, $h_{\source, \type} > 0$, $w_{\source,\type}^{\p} > 0$, and $m_{\source, \type}^{\p} > 0$ are stepsize for $q_e$, $r_{\source, \type}$, $u_{\source,\type}^{\p}$ and $v_{\source, \type}^{\p}$, respectively, and
 $   (y)_x^+ = 
    \begin{cases}
    y, & x > 0, \\
    \max (y, 0), & x \le 0.
    \end{cases}$
These operations can indeed be distributed across the network, as we describe in Sec.~\ref{sec: DFW}. The following theorem states the convergence of this modified primal dual gradient algorithm, \CR{according to Thm.~11 in \cite{feijer2010stability}}:
\begin{theorem}
\label{thm: PrimalDualConvergency}
The trajectories of the modified primal–dual gradient algorithm (Eqs.~\eqref{eq: primal_update} and \eqref{eq: dual_update}), with constant stepsize, converge to $\vc{v}^*_{\theta}$. The $\vc{v}^*_{\theta}$ is an optimum of Prob.~\eqref{FW_1} over $\feasibleset_{\theta}$, where $\feasibleset_{\theta}$ is $\feasibleset$ with \eqref{cons: link capacity} replaced by~\eqref{cons: link capacity approx}.
\end{theorem}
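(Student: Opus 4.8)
The plan is to verify that the relaxed direction-finding problem, together with the $\phi$-transformed Lagrangian, meets the hypotheses of Thm.~11 in Feijer and Paganini~\cite{feijer2010stability}, and then to invoke their LaSalle-type convergence result. First I would confirm that the relaxed problem is a convex program: the objective $\langle\vc{v},\widehat{\nabla U(\vc{\lambda})}\rangle$ in Eq.~\eqref{FW_1} is linear (hence concave) in $\vc{v}$, while each constraint function $g_e$ and $g_{\source,\type}$ is a nonnegative combination of $\ell_\theta$ norms (with $\theta\ge 1$, over the nonnegative orthant) and thus convex, and $g_{\source,\type}^{\p}=-v_{\source,\type}^{\p}$ is linear. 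Consequently $\feasibleset_\theta$ is a convex set. I would then check Slater's condition by exhibiting a strictly interior feasible point (a sufficiently small strictly positive rate vector satisfies \eqref{cons: link capacity approx} and \eqref{cons: source approx} strictly whenever the capacities $\mu^e$ and source rates $\lambda_{\source,\type}$ are positive), so that strong duality holds and a saddle point $(\vc{v}^*_\theta,\vc{q}^*,\vc{r}^*,\vc{u}^*)$ of the Lagrangian $L$ exists with $\vc{v}^*_\theta$ optimal for \eqref{FW_1} over $\feasibleset_\theta$.

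Second, I would identify Eqs.~\eqref{eq: primal_update}--\eqref{eq: dual_update} as a projected, constant-stepsize discretization of the continuous-time primal–dual gradient flow analyzed in~\cite{feijer2010stability}, where the operator $(\cdot)_x^+$ enforces nonnegativity of the dual variables. The decisive structural fact is the constraint substitution $u\le 0\mapsto \phi(u)=e^u-1\le 0$: since $\phi$ is strictly increasing with $\phi(0)=0$, it leaves the feasible set and the KKT system unchanged, yet it makes each penalty term $-q_e(e^{g_e(\vc{v})}-1)$ (and likewise the terms for $\vc{r}$ and $\vc{u}$) \emph{strictly} concave in $\vc{v}$ wherever the associated dual is positive. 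This injected curvature is exactly the device by which~\cite{feijer2010stability} circumvents the harmonic oscillations that a linear objective with linear constraints would otherwise produce.

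Third, I would apply Krasovskii's method with the quadratic Lyapunov function $V=\tfrac12\|\vc{v}-\vc{v}^*_\theta\|_2^2+\tfrac12\big(\|\vc{q}-\vc{q}^*\|_2^2+\|\vc{r}-\vc{r}^*\|_2^2+\|\vc{u}-\vc{u}^*\|_2^2\big)$ measuring distance to the saddle point. The concave–linear (convex–concave) structure of $L$ yields $\dot V\le 0$ along trajectories, and LaSalle's invariance principle confines the trajectory to the largest invariant subset of $\{\dot V=0\}$. The strict concavity contributed by the $\phi$-transformed constraints forces this invariant set to collapse to the single saddle point, giving $\vc{v}(\tau)\to\vc{v}^*_\theta$; the strong duality established in the first step then certifies that $\vc{v}^*_\theta$ solves \eqref{FW_1} over $\feasibleset_\theta$.

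The main obstacle is the LaSalle step: showing that the invariant set where $\dot V=0$ reduces to exactly the optimal saddle point despite the objective being only linear, which hinges entirely on verifying that the $e^u-1$ transformation supplies enough strict curvature along the active constraints (this is the content we import from Thm.~11 of~\cite{feijer2010stability}). A secondary subtlety is that~\cite{feijer2010stability} analyzes continuous-time dynamics, so I would need to argue that the constant-stepsize iteration in \eqref{eq: primal_update}--\eqref{eq: dual_update} inherits the same limit --- either by taking the stepsizes small enough to remain within the basin guaranteed by the continuous analysis, or via a standard ODE-method argument relating the discrete trajectory to the flow.
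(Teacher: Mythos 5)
Your proposal matches the paper's approach: the paper establishes Thm.~\ref{thm: PrimalDualConvergency} simply by invoking Thm.~11 of Feijer and Paganini~\cite{feijer2010stability}, whose machinery --- the $\phi(u)=e^u-1$ constraint transformation injecting strict curvature into the otherwise linear program, convexity of the $\ell_\theta$-relaxed constraints, and a LaSalle invariance argument collapsing the invariant set to the saddle point --- is exactly what you reconstruct. The discrete-versus-continuous-time subtlety you flag is genuine but is likewise elided by the paper itself, which asserts convergence under constant stepsizes while citing a continuous-time result.
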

For $\vc{v}^*$ be the optimum of Eq. \eqref{FW_1}, $\lim_{\theta\to\infty} \vc{v}^*_{\theta} \to \vc{v}^*$, as $\feasibleset_{\theta} \to \feasibleset$.
Thus, our distributed algorithm preserves a $1-\frac{1}{e}$ approximation factor from the optimal objective value as stated in Thm.~\ref{thm: FW}, for large enough $\theta$.

\newcounter{tableeqn}
\renewcommand{\thetableeqn}{\thetable.\arabic{tableeqn}}
\newcounter{tablesubeqn}[tableeqn]
\renewcommand{\thetablesubeqn}{\thetableeqn\alph{tablesubeqn}}

\begin{table*}[t]
\centering
\stepcounter{table}
\begin{tabular}{lr}
$v_{\source, \type}^{\p}(\tau+1) = v_{\source, \type}^{\p}(\tau) + m_{\source, \type}^{\p} \left( \nabla_{\lambda_{\source, \type}^{\p}} U(\vc{\lambda}) - \sum_{e \in \p} q_e(\tau) \cdot \right.
e^{ \sum_{\source' \in \sources, \type' \in \types} (v_{\source', \type'}^e(\tau))^{\frac{1}{\theta}} - \mu^e }
( v_{\source,\type}^e(\tau) )^{\frac{1-\theta}{\theta}} (v_{\source,\type}^{\p}(\tau))^{\theta-1} - $ &
\multirow{2}{*}{\refstepcounter{tableeqn} (\thetableeqn)\label{eq: primal_update2}} \\
$r_{\source,\type} e^{ \left( \sum_{\p \in \paths_{\source, \type}} (v_{\source,\type}^{\p}(\tau) )^\theta \right)^{\frac{1}{\theta}} - \lambda_{\source,\type}} \left( \sum_{\p \in \paths_{\source, \type}} (v_{\source,\type}^{\p}(\tau) )^\theta\right)^{\frac{1-\theta}{\theta}} 
\left. (v_{\source,\type}^{\p}(\tau))^{\theta-1} + u_{\source,\type}^{\p} \exp (- v_{\source, \type}^{\p}(\tau)) \right)$. \\
 
$v_{\source, \type}^e(\tau) = \sum_{\p \in \paths_{\source, \type}: e \in \p} (v_{\source,\type}^{\p}(\tau))^\theta$. &
\refstepcounter{tableeqn} (\thetableeqn)\label{eq: flow} \\

$q_e (\tau + 1) = q_e(\tau) + k_e \big( e^{ \sum_{\source \in \sources, \type \in \types} (v_{\source, \type}^e(\tau))^{\frac{1}{\theta}} \! - \! \mu^e } \! - \! 1 \big)_{q_e(\tau)}^+$. &
\refstepcounter{tableeqn} (\thetableeqn)\label{eq: dual_update4} \\
 
$r_{\source, \type} (\tau+1) = r_{\source, \type}(\tau) + h_{\source, \type} \left( e^{ \left( \sum_{\p \in \paths_{\source, \type}} (v_{\source,\type}^{\p}(\tau) )^\theta \right)^{\frac{1}{\theta}} - \lambda_{\source,\type}} - 1 \right)_{r_{\source, \type}(\tau)}^+$. &
\refstepcounter{tableeqn} (\thetableeqn)\label{eq: dual_update5} \\
 
$u_{\source,\type}^{\p}(\tau+1) = u_{\source,\type}^{\p}(\tau) + w_{\source,\type}^{\p} \left( e^{-v_{\source,\type}^{\p} (\tau)}) - 1 \right)_{u_{\source,\type}^{\p}(\tau)}^+$. &
\refstepcounter{tableeqn} (\thetableeqn)\label{eq: dual_update6} \\
\end{tabular}
\addtocounter{table}{-1}%
\caption{\blu{Expanding primal and dual steps in Eqs.~\eqref{eq: primal_update} and \eqref{eq: dual_update}, so that we can execute FW algorithm distributively.}}\label{expand}
\end{table*}

\subsection{Distributed FW Implementation Details}
\label{sec: DFW}
We conclude by giving the full implementation details of the distributed FW algorithm and, in particular, the primal dual steps, describing the state maintained by every node, the messages exchanged, and the constituent state adaptations. 
Starting from $\vc{\lambda}(0) = \vc{0}$, the algorithm iterates over:
\begin{packed-enumerate}
    \item Each source node $\source \in \sources$ finds direction $v_{\source,\type}^{\p}(k)$ for all $\type \in \types$ and $\p\in\paths_{\source, \type}$ by the primal dual gradient algorithm.
    \item Each source node $\source$ updates $\lambda_{\source, \type}^{\p}(k + 1)$ using $v_{\source,\type}^{\p}(k)$ for all $\type \in \types$, and $\p\in\paths_{\source, \type}$ by executing Eq.~\eqref{eq: update2}.
\end{packed-enumerate}

\begin{algorithm}[t]
\caption{Distributed Frank-Wolfe Variant}
\label{alg: distributeFW}
\LinesNumbered
\KwIn{ $U:\feasibleset\to\reals_+$,  $\feasibleset$, stepsize $\delta\in(0,1]$.}
$\vc{\lambda}^0=0, \eta=0, k=0$ \\
\ForEach{source $\source \in \sources$}{
\While{$\eta<1$}{
    find direction $v_{\source,\type}^{\p}(k)$ by Alg. \ref{alg: PrimalDual} \\
    $\gamma_k = \min \{ \delta, 1-\eta \}$ \\
    $\lambda_{\source, \type}^{\p}(k + 1) = \lambda_{\source, \type}^{\p}(k) + \gamma v_{\source, \type}^{\p}(k)$, $\eta=\eta+\gamma_k$, $k=k+1$
}}
\Return $\vc{\lambda}(K)$
\end{algorithm}

\begin{algorithm}[t]
\caption{Primal Dual Gradient Algorithm}
\label{alg: PrimalDual}
\LinesNumbered
\KwIn{Rates $\vc{\lambda}$.}
\KwOut{Directions $\vc{v}$.}
Initialize direction $\vc{v}(0)=\vc{0}$, dual variables $\vc{q}(0),\vc{r}(0),\vc{u}(0) = \vc{0}$. \\
\ForEach{learner $\learner \in \learners$}{
    Send control messages carrying $\widehat{\nabla_{\lambda_{\source, \type}^{\p}} U(\vc{\lambda}^{\learner})}$ calculated by Eq. \eqref{eq: estimator} downstream over $\p$.
}
\For{$\tau=1,2,...$}{
    \ForEach{source $\source \in \sources$}{
        Generate features carrying $v_{\source,\type}^{\p}$ upstream.
    }
    \ForEach{edge $e \in \edges$}{
        Calculate $v_{\source,\type}^e$ using fetched $v_{\source,\type}^{\p}$ by Eq. \eqref{eq: flow}.
    }
    \ForEach{learner $\learner \in \learners$}{
        Send control messages downstream and collect $q_e$ and $v_{\source,\type}^e$ from traversed edges.
    }
    \ForEach{edge $e \in \edges$}{
        Update $q_e$ using calculated $v_{\source,\type}^e$ by Eq. \eqref{eq: dual_update4}.
    }
    \ForEach{source $\source \in \sources$}{
        Update $r_{\source,\type}$ using maintained $v_{\source,\type}^{\p}$ by Eq. \eqref{eq: dual_update5}. \\
        Update $u_{\source,\type}^{\p}$ using maintained $v_{\source,\type}^{\p}$ by Eq. \eqref{eq: dual_update6}. \\
        Update $v_{\source,\type}^{\p}$ using received $\widehat{\nabla_{\lambda_{\source, \type}^{\p}} U(\vc{\lambda}^{\learner})}$, $q_e$, $v_{\source,\type}^e$, and maintained $v_{\source,\type}^{\p}$ by Eq. \eqref{eq: primal_update2}.
    }

}
\Return $v_{\source,\type}^{\p}$ from each source $\source$
\end{algorithm}

We describe the first step in more detail. Every edge $e \in \edges$ maintains (a) Lagrange multiplier $q_e$, and (b) auxiliary variable $v_{\source, \type}^e$ for all $\source \in \sources$, $\type \in \types$. 
Every source $\source \in \sources$ maintains (a) direction $v_{\source,\type}^{\p}$ and (b) Lagrange multipliers $u_{\source,\type}^{\p}$, for all $\type \in \types$, $\p \in \paths_{\source, \type}$, and $r_{\source, \type}$, for all $\type \in \types$. The algorithm initializes all above variables by $\vc{0}$. Given the rates $\vc{\lambda}^{\learner}$, each learner estimates the gradient $\widehat{\nabla_{\lambda_{\source, \type}^{\p}} U(\vc{\lambda}^{\learner})}$ by Eq.~\eqref{eq: estimator}. Control messages carrying $\widehat{\nabla_{\lambda_{\source, \type}^{\p}} U(\vc{\lambda}^{\learner})}$ are generated and propagated over the path $\p$ in the reverse direction to sources.
Note that this algorithm is a synchronous algorithm where information needs to be exchanged within a specified intervals. 
Thus, the algorithm proceeds as follows during iteration $\tau+1$.
\begin{packed-enumerate}
    \item When feature $\x$ is generated from source $\source \in \sources$, it is propagated over the path $p$ to learner carrying direction $v_{\source,\type}^{\p}$. Every time it traverses an edge $e \in \edges$, edge $e$ fetches $v_{\source,\type}^{\p}$.

    \item After fetching all  $v_{\source,\type}^{\p}$, each edge $e \in \edges$ calculates the auxiliary variables $v_{\source, \type}^e(\tau)$
    for all $\source \in \sources$ and $\type \in \types$ by executing~\eqref{eq: flow}.
    
    \item  Learner $\learner \in \learners$ generates a control message,  sent over path $p$ in the reverse direction until reaching the source. When traversing edge $e \in \edges$, the control message collects $q_e$ and $v_{\source,\type}^e$. The source obtains these $q_e$ and $v_{\source,\type}^e$.
    
    \item After receiving all control messages, the edge $e \in \edges$ updates the Lagrangian multiplier $q_e (\tau + 1)$ using calculated $v_{\source,\type}^e$ by executing~\eqref{eq: dual_update4}.

    \item Upon obtaining $\widehat{\nabla_{\lambda_{\source, \type}^{\p}} U(\vc{\lambda}^{\learner})}$, $q_e$ and $v_{\source,\type}^e$, the source updates the Lagrangian multiplier $r_{\source, \type} (\tau+1)$ 
    by executing~Eq.~\eqref{eq: dual_update5},
     for all $\type \in \types$, 
    updates $u_{\source,\type}^{\p}(\tau+1)$ 
    by executing~Eq. \eqref{eq: dual_update6},
    for all $\type \in \types$, $\p\in\paths_{\source, \type}$,
    and updates the direction $v_{\source, \type}^{\p}(\tau+1)$ using the received $\widehat{\nabla_{\lambda_{\source, \type}^{\p}} U(\vc{\lambda}^{\learner})}$ $q_e$, $v_{\source,\type}^e$ 
    by executing~Eq. \eqref{eq: primal_update2}, 
    for all $\type \in \types$, $\p\in\paths_{\source, \type}$.
\end{packed-enumerate}
\CR{The above steps are summarized in Alg.~\ref{alg: distributeFW}. }
This implementation is indeed in a decentralized form: the updates happening on sources and edges require only the knowledge of entities linked to them.

\section{Projected Gradient Ascent}
\label{sec: extensions}

We can also solve Prob.~\eqref{prob: utility} by  \emph{projected gradient ascent} (PGA)  \cite{hassani2017gradient}. Decentralization reduces then to a primal-dual algorithm~\cite{srikant2004mathematics, feijer2010stability} over a strictly convex objective, which is easier than the FW variant we studied; however, PGA comes with a worse approximation guarantee. We briefly outline  this  below.  Starting from $\vc{\lambda}(0) = \vc{0}$,
PGA iterates over:
\begin{subequations}
\begin{align}
    & \vc{v}(k) =  \vc{\lambda}(k) + \gamma \widehat{\nabla U(\vc{\lambda}(k))} \label{PGA_1}\\
    & \vc{\lambda}(k+1) = \Pi_{\feasibleset}(\vc{v}(k)) \label{PGA_2}
\end{align}
\end{subequations}
where $\widehat{\nabla U(\cdot)}$ is an estimator of the gradient $\nabla U$, $\gamma$ is the stepsize, and $\Pi_{\feasibleset}(\x)=\arg\min_{\vc{y}\in \feasibleset} (\vc{y}-\x)^2$ is the orthogonal projection. Our gradient estimator in Sec.~\ref{sec: gradient estimation} would again be used here to compute $\widehat{\nabla U(\vc{\lambda}^k)} $. Note that, to achieve the same quality of gradient estimator, PGA usually takes a longer time compared to the FW algorithm. This comes from larger $\lambda^\learner_{\source}(k)$, thus, larger truncating parameter $n'$, during the iteration $k$ (see also Sec.~\ref{sec: competitor algorithms} for how we set algorithm parameters). Furthermore, PGA comes with a worse approximation guarantee compared to the FW algorithm, namely, $1/2$ instead of $1-1/e\approx 0.63$; this would follow by combining the guarantee in~\cite{hassani2017gradient} with the gradient estimation bounds in Sec.~\ref{sec: gradient estimation}. 
Similar to distributed FW, we can easily decentralize Eq.~\eqref{PGA_1}. Eq.~\eqref{PGA_2} has a strictly convex objective, so we can directly decentralize it through a standard primal-dual algorithm with approximated multicast link capacity constraints, as in Eq.~\eqref{cons: link capacity approx}. Convergence then is directly implied by Thm.~5 in~\cite{feijer2010stability}.

\section{Numerical Evaluation}
\label{sec: numerical evaluation}

\begin{table}[t]
\small
\caption{Graph Topologies and Experiment Parameters}
\label{tab:topologies}
\centering
\setlength{\tabcolsep}{1.5mm}{
\begin{tabular}{ccccccccc}
Graph & $|V|$ & $|E|$ & $\mu^e$ & $|\learners|$ & $|\sources|$ & $|\types|$ & $U_{\texttt{DFW}}$ & $U_{\texttt{DPGA}}$ \\
\hline
\multicolumn{9}{c}{synthetic topologies}\\
\hline
\texttt{ER} & 100 & 1042 & 5-10 & 5 & 10 & 3 & 351.8 & 357.3 \\
\texttt{BT} & 341 & 680 & 5-10 & 5 & 10 & 3 & 163.3 & 180.6 \\
\texttt{HC} & 128 & 896 & 5-10 & 5 & 10 & 3 & 320.4 & 343.7 \\
\texttt{star} & 100 & 198 & 5-10 & 5 & 10 & 3 & 187.1 & 206.0 \\
\texttt{grid} & 100 & 360 & 5-10 & 5 & 10 & 3 & 213.6 & 236.9 \\
\texttt{SW} & 100 & 491 & 5-10 & 5 & 10 & 3 & 269.5 & 328.4 \\
\hline
\multicolumn{9}{c}{real backbone networks} \\
\hline
\texttt{GEANT} & 22 & 66 & 5-8 & 3 &  3 & 2 & 116.4 & 117.4 \\ 
\texttt{Abilene} & 9 & 26 & 5-8 & 3 &  3 & 2 & 141.3 & 139.6 \\
\texttt{Dtelekom}  & 68 & 546 & 5-8 & 3 & 3 & 2 & 125.1 & 142.3 \\
\end{tabular}}
\end{table}
\subsection{Experimental Setup}

\noindent\textbf{Topologies.} We perform experiments over five synthetic graphs, namely, Erd\H{o}s-R\'enyi (\texttt{ER}), balanced tree (\texttt{BT}), hypercube (\texttt{HC}), grid\_2d (\texttt{grid}), and small-world (\texttt{SW})~\cite{kleinberg2000small}, and three backbone network topologies: Deutsche Telekom (\texttt{DT}), \texttt{GEANT}, and \texttt{Abilene}~\cite{rossi2011caching}. The graph parameters of different topologies are shown in Tab.~\ref{tab:topologies}.

\fussy
\noindent\textbf{Network Parameter Settings.} For each network, we uniformly at random (u.a.r.) select $|\learners|$ learners and $|\sources|$ data sources. Each edge $e\in\mathcal{E}$ has a link capacity $\mu^e$ and types  $\types$ as indicated in Tab.~\ref{tab:topologies}.
Sources generate feature vectors with dimension $d=100$ within data acquisition time $T=1$. Each source $s$ generates the data $(\x, y)$ of type $\type$ label with rate $\lambda_{\source,\type}$, uniformly distributed over [5,8]. Features $\x$ from source $s$ are generated following a zero mean Gaussian distribution, whose covariance is generated as follows. First, we separate features into two classes: well-known and poorly-known. Then, we set the corresponding Gaussian covariance (i.e., the diagonal elements in  $\vc{\Sigma}_{\source}$) to low (uniformly from 0 to 0.01) and high (uniformly from 10 to 20) values, for well-known and poorly-known features, respectively. Source $s$ labels $y$ of type $t$ using ground-truth models, as discussed below, with Gaussian noise, whose variance $\sigma_{\source,\type}$ is chosen u.a.r. (uniformly at random)~from 0.5 to 1.
For each source, the paths set consists of the shortest paths between the source and every learner in $\learners$. 
Each learner has a target model $\vc{\beta}_{\type^\learner}$, which is sampled from a prior normal distribution as follows. Similarly to sources, we separate features into interested and indifferent. Then, we set the corresponding prior covariance (i.e., the diagonal elements in  $\vc{\Sigma}_0^\learner$) to low (uniformly from 0 to 0.01) and high (uniformly from 1 to 2) values, and set the corresponding prior mean to 1 and 0, for interested and indifferent features, respectively.

\noindent\textbf{Algorithms.}
\label{sec: competitor algorithms}
We implement our algorithm and several competitors. First, there are four centralized algorithms:
\begin{packed-itemize}
    \item \texttt{MaxTP}: This maximizes the aggregate incoming traffic rates (throughput) of learners, i.e.: 
    \begin{equation}
    \label{eq: MaxTP}
    \max_{\vc{\lambda} \in \feasibleset}: U_{\texttt{MaxTP}}(\vc{\lambda})=\sum_{\learner\in\learners}\sum_{\source\in\sources} \lambda^\learner_{\source}.
    \end{equation}
    \item \texttt{MaxFair}: This maximizes the aggregate $\alpha$-fair utilities~\cite{srikant2004mathematics} of the incoming traffic at learners, i.e.:
    \begin{equation}
    \label{eq: MaxFair}
    \max_{\vc{\lambda} \in \feasibleset}: U_{\texttt{MaxFair}}(\vc{\lambda})=\sum_{\learner\in\learners}(\sum_{\source\in\sources} \lambda^\learner_{\source})^{1-\alpha}/(1-\alpha).
    \end{equation}
    We set $\alpha = 2.$
    \item \texttt{FW}: This is Alg.~\eqref{FW}, as proposed in Sec.~\ref{sec: centralized algorithm}.
    \item \texttt{PGA}: This is the algorithm we proposed in Sec.~\ref{sec: extensions}. 
\end{packed-itemize}
We also implement their corresponding distributed versions: \texttt{DMaxTP}, \texttt{DMaxFair}, \texttt{DFW} (algorithm in Sec.~\ref{sec: DFW}), and \texttt{DPGA} (see Sec.~\ref{sec: extensions}). The objectives of \texttt{MaxTP} Eq. \eqref{eq: MaxTP} and \texttt{MaxFair} Eq. \eqref{eq: MaxFair} are linear and strictly concave, respectively. The modified primal dual gradient algorithm, used in \texttt{DFW}, and basic primal dual gradient algorithm, used in \texttt{DPGA}, directly apply to \texttt{DMaxTP} and \texttt{DMaxFair}, respectively.

\noindent\textbf{Algorithm Parameter Settings.} We run \texttt{FW}/\texttt{DFW} and \texttt{PGA}/\texttt{DPGA} for $K = 50$ iterations, i.e. stepsize $\gamma = 0.02$, with respect to the outer iteration. In each iteration, we estimate the gradient according to Eq. \eqref{eq: estimator} with sampling parameters $N_1 = 50$, $N_2 = 50$, and truncating parameters $n' = \max \{ \lceil 2\max_{\learner,\source}\lambda^\learner_{\source}T \rceil,$ $ 10 \}$, where $\lambda^\learner_{\source}$ is given by the current solution. We run the inner primal-dual gradient algorithm for 1000 iterations and set parameter $\theta=10$ when approximating the $\max$ function via Eqs.~\eqref{cons: link capacity approx} and \eqref{cons: source approx}. We compare the performance metrics (Aggregate utility and Infeasibility, defined in Sec.~\ref{sec: metrics}), between centralized and distributed versions of each algorithm under different stepsizes, and we choose the best stepsize for distributed primal-dual algorithms. We further  discuss   the impact of the stepsizes in Sec.~\ref{sec: experimental result}. 

\subsection{Performance Metrics}
\label{sec: metrics}
To evaluate the performance of the algorithms, we use the \emph{Aggregate Utility}, defined in Eq.~\eqref{eq: objective function} as one metric. Note that as the aggregate utility involves a summation with infinite support, we thus need to resort to sampling to estimate it; we set $N_1 = 100$ and $N_2 = 100$. Also, we define an \emph{Estimation Error} to measure the model learning/estimation quality. Formally, it is defined as:
 $   \frac{1}{|\learners|} \sum_{\learner \in \learners} \frac{\| \hat{\vc{\beta}}_\map^{\learner} - \vc{\beta}^{\learner} \|}{\|\vc{\beta}^{\learner}\|},$
%
following the equation of MAP estimation Eq.~\eqref{eq:map}. We average over 2500 realizations of the number of data arrived at the learner $\{\vc{n}^\learner\}_{\learner \in \learners}$ and features $\{\vc{X}^\learner \}_{\learner \in \learners}$ , and 20 realizations of ground-truth models $\{\vc{\beta}^{\learner}\}_{\learner \in \learners}$. Finally, we define an \emph{Infeasibility} to measure the feasibility of solutions, as primal dual gradient algorithm used in distributed algorithms does not guarantee feasibility. It averages the total violations of constraints \eqref{eq: lambda}-\eqref{cons: source} over the number of constraints. 

\begin{figure}[t!]
     \centering
     \begin{subfigure}{1.\linewidth}
         \centering
         \includegraphics[width=\textwidth]{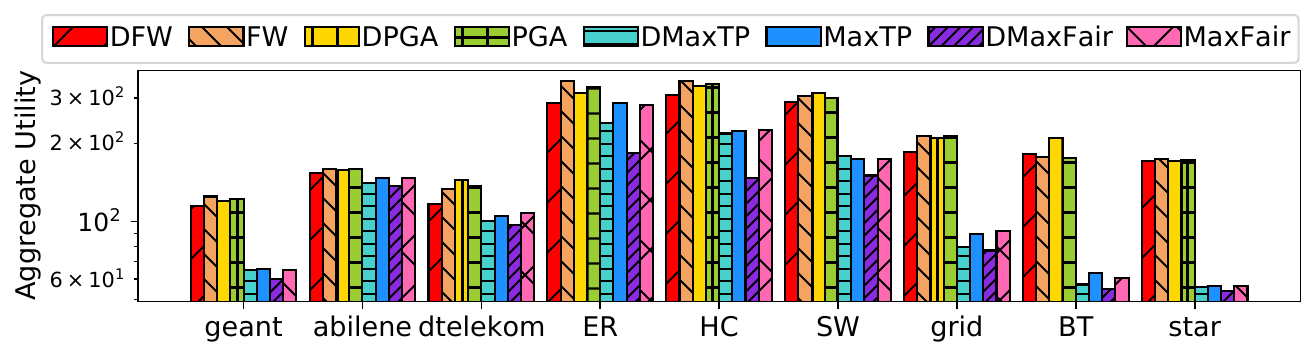}
         \caption{Aggregate Utility}
         \label{fig: aggregate utility}
     \end{subfigure}
     \begin{subfigure}{1.\linewidth}
         \centering
         \includegraphics[width=\textwidth]{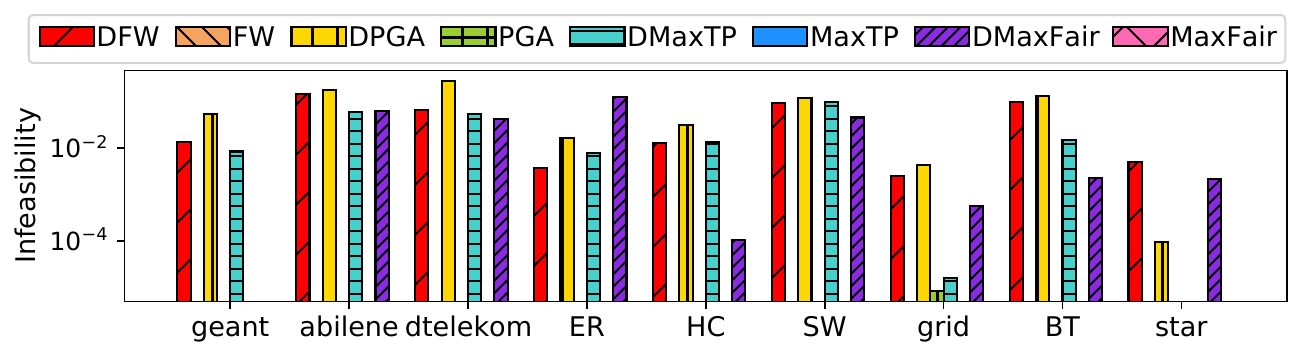}
         \caption{Infeasibility}
         \label{fig: infeasibility}
     \end{subfigure}
     \begin{subfigure}{1.\linewidth}
         \centering
         \includegraphics[width=\textwidth]{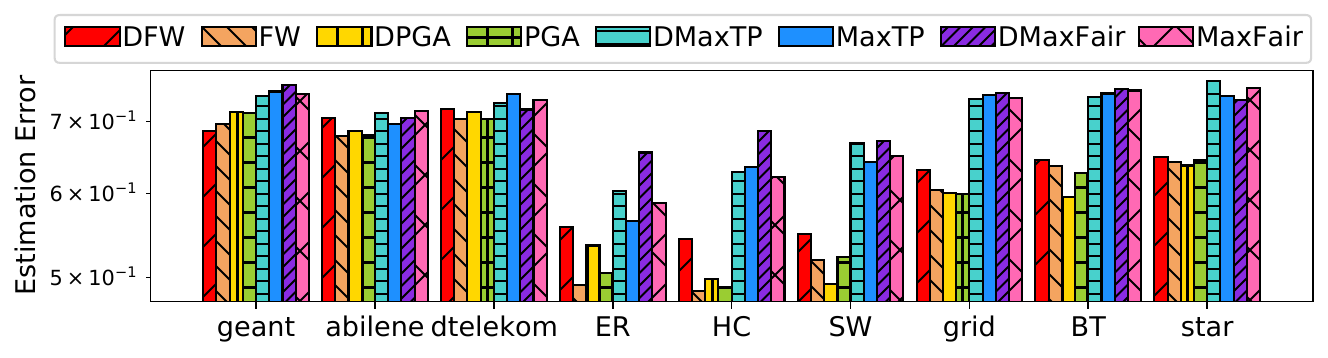}
         \caption{Estimation Error}
         \label{fig: estimation error}
     \end{subfigure}
        \caption{Aggregate utility, infeasibility and estimation error across networks. \texttt{DFW} and \texttt{DPGA} perform very well in terms of maximizing the utility and minimizing the estimation error in all networks. The aggregate utilities of \texttt{DFW} and \texttt{DPGA} are also listed in Tab.~\ref{tab:topologies}. Furthermore, their performances are close to their centralized versions: \texttt{FW} and \texttt{PGA}, with an acceptable infeasibility~$\sim 0.1$.}
        \label{fig: different_top}
\end{figure}

\begin{figure}[t!]
    \centering
  \includegraphics[width=1.\linewidth]{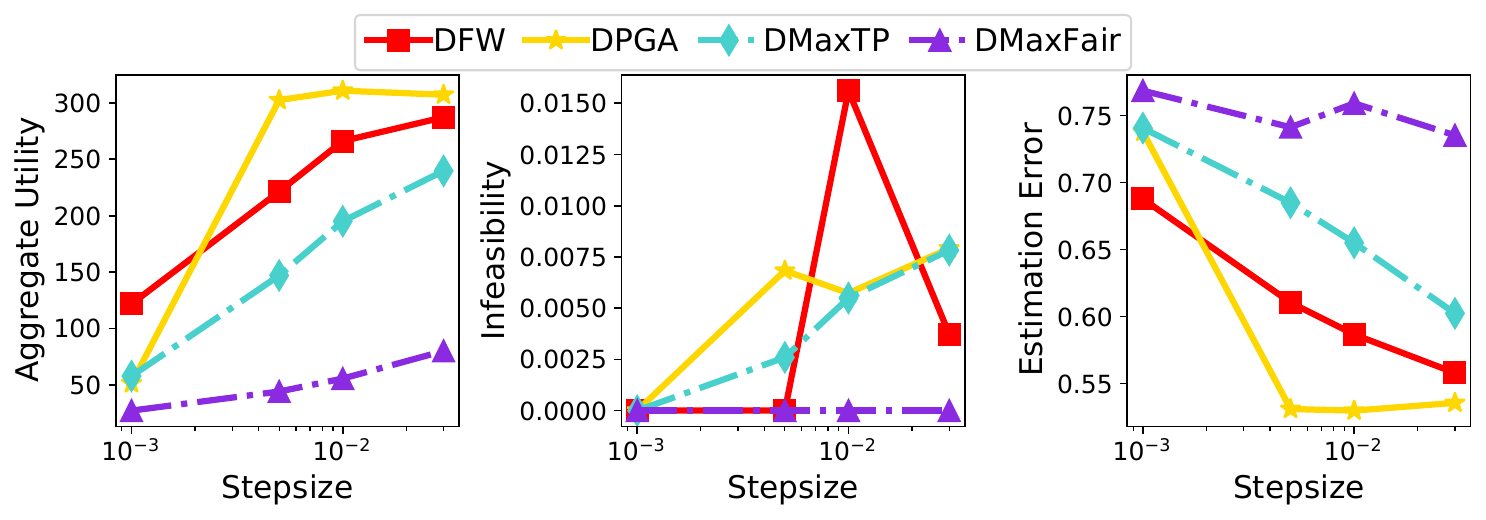}
    \caption{Stepsize effect on primal dual gradient algorithms over topology \texttt{ER}. Larger stepsizes lead to better performance, and \texttt{DFW} and \texttt{DPGA} are always the best in terms of both utility and estimation error. However, stepsizes above $0.03$ lead to numerical instability. }
    \label{fig: stepsize}
\end{figure}

\begin{figure}[t!]
     \centering
     \begin{subfigure}{1.\linewidth}
         \centering
         \includegraphics[width=\textwidth]{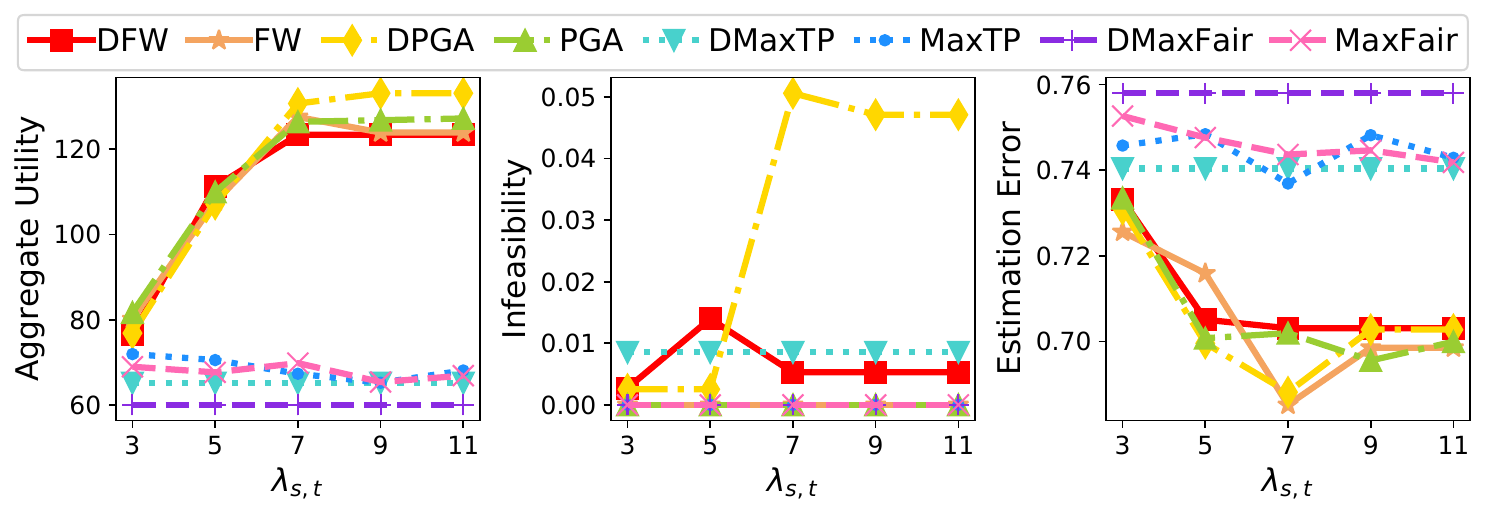}
         \caption{Varying source rate}
         \label{fig: rate}
     \end{subfigure}
     \begin{subfigure}{1.\linewidth}
         \centering
         \includegraphics[width=\textwidth]{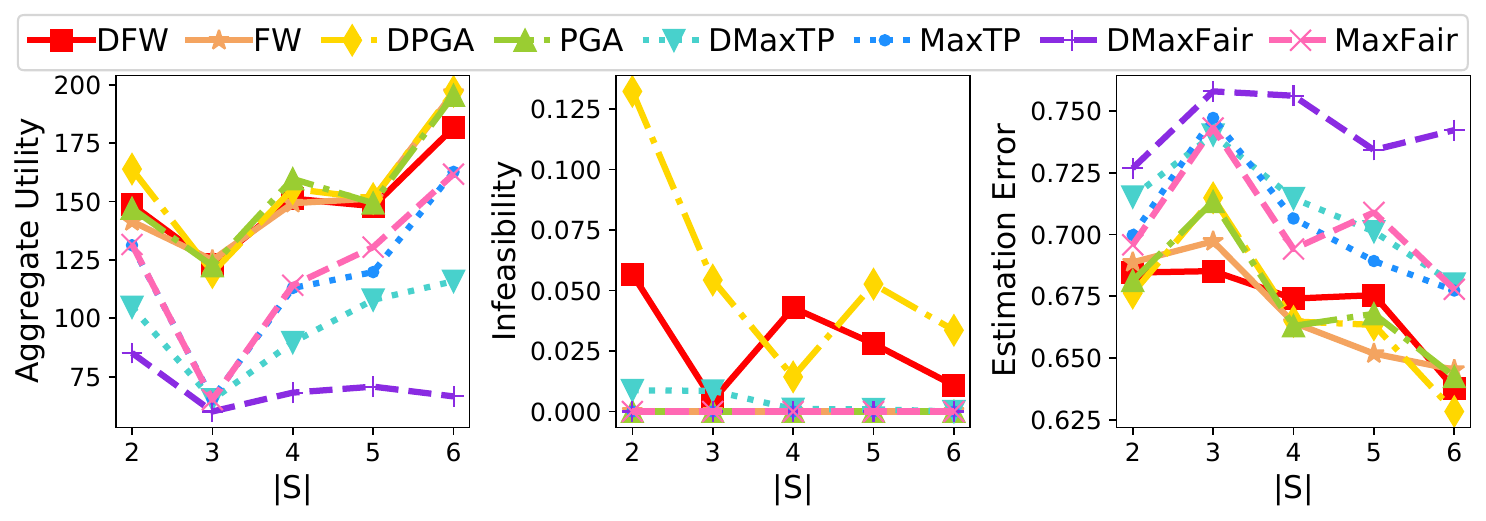}
         \caption{Varying source set size}
         \label{fig: source}
     \end{subfigure}
        \caption{Varying source rates and source set size over \texttt{GEANT}. When increasing source rates and source set sizes, learners receive more data. This leads to higher aggregate utility, and lower estimation error. Our algorithms, \texttt{DFW} and \texttt{DPGA}, stay close to their centralized versions (\texttt{FW} and \texttt{PGA}) and outperform competitors in both metrics, with a small change in feasibility.}
\end{figure}

\begin{figure}[t!]
    \centering
    \includegraphics[width=1.\linewidth]{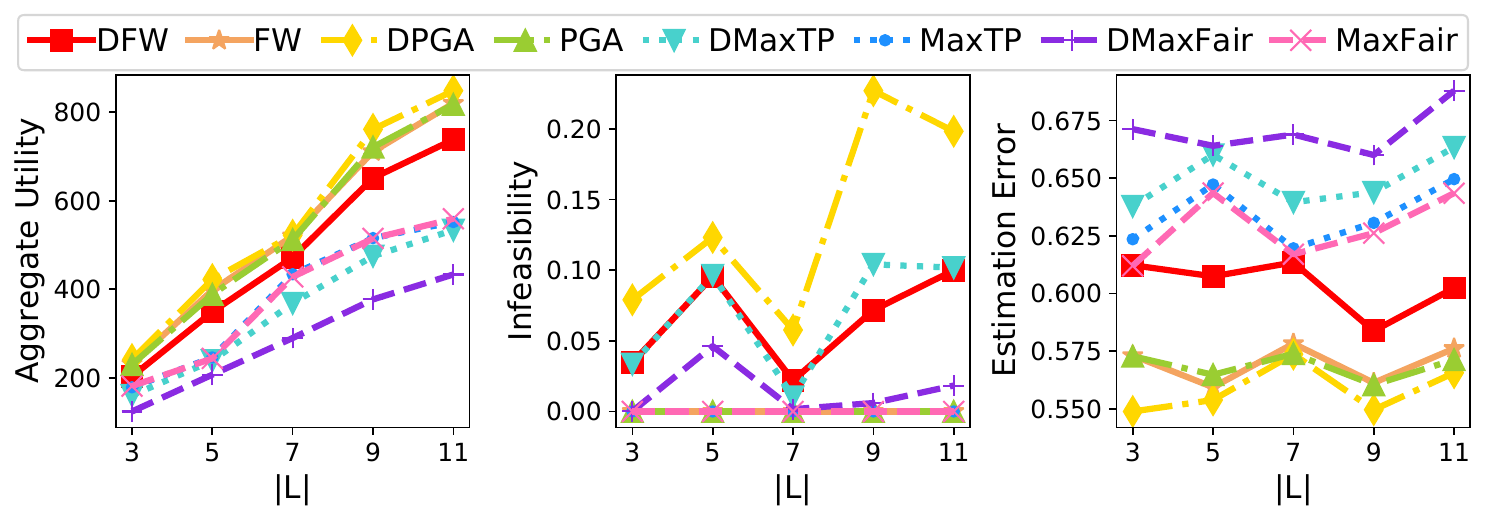}
    \caption{Varying learner set size over topology \texttt{SW}. The aggregate utility increases, while the estimation error remains essentially unchanged, as the number of learners increases. \texttt{DFW} and \texttt{DPGA} again stay close to their centralized versions and outperform competitors.}
    \label{fig: learner}
\end{figure}

\subsection{Results}
\label{sec: experimental result}
\noindent \textbf{Different Topologies. }
We first compare the proposed algorithms with several baselines in terms of aggregate utility, infeasibility and estimation error over several network topologies, shown in Fig.~\ref{fig: different_top}. Our proposed algorithms dramatically outperform all competitors, and our distributed algorithms perform closed to their corresponding centralized algorithms (c.f. Thm.~\ref{thm: PrimalDualConvergency}). All of the distributed algorithms have a low infeasibility, which is less than 0.1. Such violations over the constraints are expected by primal-dual gradient algorithms, since they employ soft constraints.

\fussy
\noindent\textbf{Effect of Stepsizes. }
We study the effect of the stepsizes in the primal dual gradient algorithms used in the distributed algorithms \texttt{DFW}, \texttt{DPGA}, \texttt{DMaxTP} and \texttt{DMaxFair} over \texttt{ER}, shown in Fig.~\ref{fig: stepsize}. When the algorithms are stable, larger stepsizes achieve better performance w.r.t.~both aggregate utility and estimation error, while worse performance with respect to~infeasibility. However, if the stepsize is too large, the algorithms do not converge and become numerically unstable. It is crucial to choose an appropriate stepsize for better performance, while maintaining convergence. In all convergent cases, \texttt{DFW} and \texttt{DPGA} outperform competitors.

\noindent\textbf{Varying Source Rates and Source Set Size. }
Next, we evaluate how algorithm performance is affected by varying the (common) source rates $\lambda_{\source, \type}$ over topology \texttt{GEANT}. As shown in Fig.~\ref{fig: rate}, when source rates increase, the aggregate utility first increases very fast and then tapers off. Higher source rates indicate more data received at the learners, hence the greater utility. However, due to DR-submodularity, the marginal gain decreases as the number of sources increases. Furthermore, under limited bandwidth, if link capacities saturate, there will be no further utility increase. The same interpretation applies to the estimation error. We observe similarly changing patterns when varying the source set size $|\sources|$ over topology \texttt{GEANT}, shown in Fig.~\ref{fig: source}, since more sources also indicates learners receive more data. However, the curve changes are not as smooth as those for increasing the source rates. This is because  varying source sets also changes the available paths, corresponding link bandwidth utilization, indexes of well-known features, etc. Overall, we observe that our algorithms, \texttt{DFW} and \texttt{DPGA}, stay close to their centralized versions (\texttt{FW} and \texttt{PGA}) and outperform competitors in both metrics, with a small change in feasibility.

\noindent\textbf{Varying Learner Set Size. }
Finally, we evaluate the effect of the learner set size~$|\learners|$ over topology \texttt{SW}. Fig.~\ref{fig: learner} shows that as  the number of learners increases so does  the aggregate utility, while the estimation errors essentially remain the same. With multicast transmissions, increasing the number of learners barely affects the amount of data received by each learner. Thus, the aggregate utility increases as expected, while the \emph{average} utility per learner (the aggregate utility divided by the number of learners) and, consequently, the estimation error hardly change, when more learners are in the network. Again, our algorithms, \texttt{DFW} and \texttt{DPGA}, stay close to their centralized versions (\texttt{FW} and \texttt{PGA}) and outperform competitors.

\section{Conclusion}
\label{sec: conclusion}
We generalize the experimental design networks by considering Gaussian sources and multicast transmissions. A poly-time distributed algorithm with $1-1/e$ approximation guarantee is proposed to facilitate heterogeneous model learning across networks.
One limitation of our distributed algorithm is its synchronization. It is natural to extend the model to an asynchronous setting, which better resembles the reality of large networks. One possible solution is that sources and links compute outdated gradients~\cite{low1999optimization}. Another interesting direction is to estimate gradients through shadow prices~\cite{ioannidis2009optimal,kelly1998rate}, instead of sampling. Furthermore, how a model trained by one learner benefits other training tasks in experimental design networks is also a worthwhile topic to study.
\CR{The authors have provided public access to their code and data.\footnote{\href{https://github.com/neu-spiral/DistributedNetworkLearning}{https://github.com/neu-spiral/DistributedNetworkLearning}}}

\section*{Acknowledgment}

\sloppy
\CR{The authors gratefully acknowledge support from the National Science Foundation (grants 1718355, 2106891, 2107062, and 2112471).}

\fussy

\bibliographystyle{IEEEtran}  
\bibliography{references}

\newpage
\onecolumn
\arxiv{
\appendices

\section{Proof of Theorem \ref{thm: submodular}} 
\label{app: proof_submodular}

Consider an equivalent data generation  process, in which  infinite sequences of independent Gaussian variables $\widetilde{\vc{X}^{\learner}} = [[ \x_{\source,i}^{\learner} ]_{i = 1}^{\infty}]_{\source \in \sources}$ from different sources $\source$ are received by the learner, but only an initial part of each (as governed by $\vc{n}^{\learner}$) is observed. Note that the statistics of this process are identical to the actual arrival process, where samples are independent conditioned on $\vc{n}^{\learner}$.
Then, combining the proof of Lem.~1 in \cite{liu2022experimental} using Sylvester's determinant identity \cite{akritas1996various, horel2014budget} and telescoping sum, we get:
\begin{lemma}
Function $G^{\learner}(\widetilde{\vc{X}^{\learner}}, \vc{n}^{\learner})$ is (a) monotone-increasing and (b) DR-submodular w.r.t. $\vc{n}^{\learner}$, where $\widetilde{\vc{X}^{\learner}} = [[ \x_{\source,i}^{\learner} ]_{i = 1}^{\infty}]_{\source \in \sources}$.
\end{lemma}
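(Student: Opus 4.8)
The plan is to fix a realization of the infinite sequence $\widetilde{\vc{X}^{\learner}}$ and treat $\vc{n}^{\learner}\mapsto G^{\learner}(\widetilde{\vc{X}^{\learner}},\vc{n}^{\learner})$ as a deterministic lattice function, reducing both claims to the effect of appending one Gaussian sample at a time. Writing the error covariance of Eq.~\eqref{eq:cov} in information (precision) form, I would set
\begin{equation*}
M(\vc{n}^{\learner})=(\vc{\Sigma}_0^{\learner})^{-1}+\sum_{\source\in\sources}\frac{1}{\sigma_{\source,\type^{\learner}}^{2}}\sum_{i=1}^{n_{\source}^{\learner}}\x_{\source,i}^{\learner}(\x_{\source,i}^{\learner})^{\top},\qquad G^{\learner}=\log\det M(\vc{n}^{\learner}),
\end{equation*}
which is the classic $\log\det$ D-optimal criterion on the precision matrix $M=\cov^{-1}$, consistent with the monotone-increasing claim. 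Incrementing a single coordinate $n_{\source}^{\learner}$ by one appends exactly one rank-one term $\tfrac{1}{\sigma_{\source,\type^{\learner}}^{2}}\vc{v}\vc{v}^{\top}$ with $\vc{v}=\x_{\source,n_{\source}^{\learner}+1}^{\learner}$, and Sylvester's determinant identity (the matrix determinant lemma) gives the single-step marginal in closed form,
\begin{equation*}
G^{\learner}(\ldots,n_{\source}^{\learner}+1,\ldots)-G^{\learner}(\ldots,n_{\source}^{\learner},\ldots)=\log\!\Big(1+\tfrac{1}{\sigma_{\source,\type^{\learner}}^{2}}\vc{v}^{\top}M(\vc{n}^{\learner})^{-1}\vc{v}\Big).
\end{equation*}
Since $M\succ 0$ this quantity is nonnegative, which settles part (a); it is also the engine for part (b).

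For part (b) I would use the local pairwise characterization of lattice DR-submodularity: it suffices to show, for every $\vc{z}\in\naturals^{|\sources|}$ and every pair of coordinates (sources) $i,j$ including $i=j$, that $G^{\learner}(\vc{z}+\vc{e}_i+\vc{e}_j)-G^{\learner}(\vc{z}+\vc{e}_i)\le G^{\learner}(\vc{z}+\vc{e}_j)-G^{\learner}(\vc{z})$. The statement for arbitrary $\x\le\vc{y}$ and step $k\in\naturals$ in Def.~\ref{def: submodular} then follows by a telescoping sum of these single-step inequalities, exactly as in the proof of Lem.~1 of \cite{liu2022experimental}. Abbreviating $\sigma_j:=\sigma_{j,\type^{\learner}}$, the off-diagonal case $i\ne j$ closes cleanly from the Sylvester expression: the appended source-$j$ vector $\vc{v}_j=\x_{j,z_j+1}^{\learner}$ is identical on both sides (as $z_j$ is unchanged), while the base matrix on the left is $M(\vc{z})+\tfrac{1}{\sigma_i^{2}}\vc{v}_i\vc{v}_i^{\top}\succeq M(\vc{z})$; monotonicity of the matrix inverse, $A\preceq B\Rightarrow A^{-1}\succeq B^{-1}$, shrinks the quadratic form $\vc{v}_j^{\top}(\cdot)^{-1}\vc{v}_j$, hence the marginal, yielding the inequality.

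The main obstacle is the diagonal case $i=j$, and I expect essentially all of the technical effort to concentrate there. In this case the two marginals being compared append \emph{distinct} samples of the same source, $\x_{j,z_j+1}^{\learner}$ and $\x_{j,z_j+2}^{\learner}$: although the base matrix still grows, the vector inside the quadratic form also changes, so the inverse-monotonicity argument alone does not control the sign of the difference. This is precisely where the analysis must depart from the finite-support setting of \cite{liu2022experimental}, in which every repetition of an experiment appends an \emph{identical} feature vector and the diagonal case is immediate. The role of the equivalent infinite-sequence process $\widetilde{\vc{X}^{\learner}}$ is exactly to supply the needed structure: conditioned on $\vc{n}^{\learner}$ the samples are i.i.d., so consecutive same-source increments append exchangeable vectors. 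I would leverage this exchangeability to match up the telescoped diagonal marginals and carry the diminishing-returns comparison through; this is also what makes the companion step ``expectation preserves DR-submodularity'' go through when lifting the lemma to the doubly-averaged objective $U$ in Thm.~\ref{thm: submodular}.
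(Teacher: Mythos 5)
Your monotonicity argument and the off-diagonal (cross-source) case coincide exactly with the paper's proof: the same precision-form matrix $M(\vc{n}^\learner)$, the same Sylvester/rank-one marginal $\log\bigl(1+\tfrac{1}{\sigma^2}\vc{v}^\top M(\vc{n}^\learner)^{-1}\vc{v}\bigr)$, the same inverse-monotonicity over the semidefinite order, and the same telescoping reduction. You are also right that the diagonal case is the crux and that it is precisely where this setting departs from the finite-support model of Liu et al. However, your proposed repair cannot prove the lemma as stated: exchangeability is a property of the \emph{distribution} of $\widetilde{\vc{X}^{\learner}}$, whereas the lemma asserts DR-submodularity of $G^{\learner}(\widetilde{\vc{X}^{\learner}},\cdot)$ for a \emph{fixed} realization, and that pointwise claim is false in the diagonal direction. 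Concretely, take $d=1$, a single source with $\sigma_{\source,\type^\learner}=1$ and $(\vc{\Sigma}_0^\learner)^{-1}=1$, and the realization $x_1=0$, $x_2=10$; then
\begin{equation*}
G(1)-G(0)=\log(1+x_1^2)=0 \;<\; \log\Bigl(1+\tfrac{x_2^2}{1+x_1^2}\Bigr)=G(2)-G(1),
\end{equation*}
violating Eq.~\eqref{eq:drsub} with $\x=(0)$, $\vc{y}=(1)$, $k=1$. No per-realization argument, exchangeability-based or otherwise, can close this case, because consecutive same-source increments genuinely append distinct vectors.

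What is true---and what the downstream development (Lem.~\ref{lem: submodular} and Thm.~\ref{thm: submodular}) actually needs---is the statement after conditioning on the counts: $g^{\learner}(\vc{n})=\mathbb{E}_{\widetilde{\vc{X}}}[G^{\learner}(\widetilde{\vc{X}},\vc{n})\mid \vc{n}]$ is DR-submodular. There your exchangeability idea is exactly the right tool: since $\x_{\source,n_\source+1}$ is a fresh i.i.d.\ draw independent of $M(\vc{n})$, both expected diagonal marginals can be rewritten with a common fresh sample $\vc{v}$, after which $M(\vc{n})^{-1}\succeq M(\vc{m})^{-1}$ applies inside the expectation. It is worth noting that the paper's own proof shares the soft spot you identified: its final step infers the DR inequality for repeated increments of the \emph{same} coordinate directly from $A(\vc{n})\succeq A(\vc{m})$, which is valid only when both sides append identical vectors---true in Liu et al.'s finite-experiment-set model (repeating an experiment repeats the identical feature vector) and in the off-diagonal case, but not for a fixed Gaussian sequence. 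So your diagnosis of the diagonal obstacle is sharper than the paper's treatment of it, but the correct conclusion is that the DR property must be asserted at the level of the conditional expectation $g^\learner$, not of the pointwise function $G^\learner(\widetilde{\vc{X}^{\learner}},\cdot)$ as the lemma states.
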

\begin{proof}
For $\vc{n}\in\naturals^{|\sources|}$, we have:
\begin{align*}
     G(\vc{X}, \vc{n}+ \vc{e}_{\source'})) - G(\vc{X}, \vc{n})  = & \log\det\big( \sum_{\source \in \sources}\sum_{i=1}^{n_\source} \frac{\x_{\source,i}\x_{\source,i}^{\top}}{\sigma_{\source}^2} + \frac{\x_{\source',n_{\source'} + 1}\x_{\source',n_{\source'} + 1}^{\top}}{\sigma_{\source'}^2} + \vc{\Sigma}_0^{-1}\big) -\log\det\big( \sum_{\source \in \sources}\sum_{i=1}^{n_\source} \frac{\x_{\source,i}\x_{\source,i}^{\top}}{\sigma_{\source}^2} + \vc{\Sigma}_0^{-1}\big) \\
    = & \log\det\big(\vc{I}_d + \frac{\x_{\source',n_{\source'} + 1}\x_{\source',n_{\source'} + 1}^{\top}}{\sigma_{\source'}^2} \big( \sum_{\source \in \sources}\sum_{i=1}^{n_\source} \frac{\x_{\source,i}\x_{\source,i}^{\top}}{\sigma_{\source}^2} + \vc{\Sigma}_0^{-1}\big)^{-1} \big) \\
    = & \log\det \big( \vc{I}_1 + \frac{1}{\sigma_{\source'}^2} \x_{\source',n_{\source'} + 1}^{\top} \vc{A}(\vc{n}) \x_{\source',n_{\source'} + 1} \big) \\
    = & \log \big( 1 + \frac{1}{\sigma_{\source'}^2} \x_{\source',n_{\source'} + 1}^{\top} \vc{A}(\vc{n}) \x_{\source',n_{\source'} + 1} \big)
\end{align*}
where $\vc{A}(\vc{n}) =  \big( \sum_{\source \in \sources}\sum_{i=1}^{n_\source} \frac{\x_{\source,i}\x_{\source,i}^{\top}}{\sigma_{\source}^2} + \vc{\Sigma}_0^{-1}\big)^{-1}$ and the second last equality follows Sylvester's determinant identity. Then for $\vc{n}\in\naturals^{|\sources|}$ and $k\in\naturals$, by telescoping sum, we have
\begin{align*}
     & G(\vc{X}, \vc{n}+ k\vc{e}_{\source'})) - G(\vc{X}, \vc{n}) \\
    = & G(\vc{X}, \vc{n}+ k\vc{e}_{\source'})) - G(\vc{X}, \vc{n}+ (k-1)\vc{e}_{\source'})) + \cdots + (G(\vc{X}, \vc{n}+ \vc{e}_{\source'}))- G(\vc{X}, \vc{n})) \\
    = & \log \big( 1 + \frac{1}{\sigma^2} \x_{\source',n_{\source'} + k}^{\top} \vc{A}(\vc{n}+ (k-1)\vc{e}_{\source'}) \x_{\source',n_{\source'} + k} \big) + \cdots + \log \big( 1 + \frac{1}{\sigma^2} \x_{\source',n_{\source'} + 1}^{\top} \vc{A}(\vc{n}) \x_{\source',n_{\source'} + 1} \big)
\end{align*}
where $\vc{A}(\vc{n}) =  \big( \sum_{\source \in \sources}\sum_{i=1}^{n_\source} \frac{\x_{\source,i}\x_{\source,i}^{\top}}{\sigma_{\source}^2} + \vc{\Sigma}_0^{-1}\big)^{-1}$ and the second last equality follows Sylvester's determinant identity. The monotonicity of $G$ follows because $A(\vc{n})$ is positive semidefinite. Finally, since the matrix inverse is decreasing over the positive semi-definite order, we have $A(\vc{n})\succeq A(\vc{m})$, $\forall\ \vc{n}, \vc{m}\in\naturals^{|\sources|}, k \in \naturals \text{ and } \vc{n}\leq\vc{m}$, which leads to $G(\vc{X},\vc{n} + k\vc{e}_s) - G(\vc{X},\vc{n})\geq G(\vc{X}, \vc{m} + k\vc{e}_s) - G(\vc{X},\vc{m})
$.
\end{proof}

Armed with this result, we use a truncation argument to prove that taking expectations w.r.t.~the (infinite) sequences $\widetilde{\vc{X}^{\learner}} = [[ \x_{\source,i}^{\learner} ]_{i = 1}^{\infty}]_{\source \in \sources}$ preserves submodularity:
\begin{lemma}
\label{lem: submodular}
Function $g^{\learner}(\vc{n}^{\learner}) = \mathbb{E}_{\vc{X}^{\learner}} [G(\widetilde{\vc{X}^{\learner}},\vc{n}^{\learner})|\vc{n}^{\learner}]$ is (a) monotone-increasing and (b) DR-submodular w.r.t. $\vc{n}^{\learner}$.
\end{lemma}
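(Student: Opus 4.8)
The plan is to lift the pointwise (per-realization) monotonicity and DR-submodularity established in the preceding lemma to the expectation, exploiting the coupling afforded by the infinite-sequence model. The crucial structural fact is that the infinite sequence $\widetilde{\vc{X}^{\learner}} = [[\x_{\source,i}^{\learner}]_{i=1}^{\infty}]_{\source\in\sources}$ is drawn \emph{once}, independently of $\vc{n}^{\learner}$, and that $G^{\learner}(\widetilde{\vc{X}^{\learner}}, \vc{n})$ reads off only the first $n_{\source}$ entries from each source. Hence, for every $\vc{n}$, all the random variables $G^{\learner}(\widetilde{\vc{X}^{\learner}}, \vc{n})$ live on one common probability space, and the conditional expectation collapses to an unconditional one, $g^{\learner}(\vc{n}) = \mathbb{E}_{\widetilde{\vc{X}^{\learner}}}\big[G^{\learner}(\widetilde{\vc{X}^{\learner}}, \vc{n})\big]$, taken under a law that does not depend on $\vc{n}$.

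First I would fix $\vc{n}, \vc{m}\in\naturals^{|\sources|}$ with $\vc{n}\le\vc{m}$, a source $\source'$, and $k\in\naturals$. The preceding lemma gives, almost surely in $\widetilde{\vc{X}^{\learner}}$, both $G^{\learner}(\widetilde{\vc{X}^{\learner}}, \vc{n}+k\vc{e}_{\source'}) \ge G^{\learner}(\widetilde{\vc{X}^{\learner}}, \vc{n})$ and the diminishing-returns inequality $G^{\learner}(\widetilde{\vc{X}^{\learner}}, \vc{n}+k\vc{e}_{\source'}) - G^{\learner}(\widetilde{\vc{X}^{\learner}}, \vc{n}) \ge G^{\learner}(\widetilde{\vc{X}^{\learner}}, \vc{m}+k\vc{e}_{\source'}) - G^{\learner}(\widetilde{\vc{X}^{\learner}}, \vc{m})$. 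Since expectation is monotone (it preserves a.s.\ inequalities) and linear, applying $\mathbb{E}_{\widetilde{\vc{X}^{\learner}}}[\cdot]$ transfers both inequalities verbatim to $g^{\learner}$, establishing (a) and (b) at once, provided each term is integrable.

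Integrability is where the truncation argument enters, and I would handle it using the telescoping decomposition derived in the preceding lemma. Writing $G^{\learner}(\widetilde{\vc{X}^{\learner}}, \vc{n})$ as the constant $G^{\learner}(\widetilde{\vc{X}^{\learner}}, \vc{0}) = \log\det\big((\vc{\Sigma}_0^\learner)^{-1}\big)$ plus a finite sum of per-sample increments of the form $\log\big(1 + \tfrac{1}{\sigma_{\source,\type}^2}\x^{\top}\vc{A}(\cdot)\x\big)$, I can bound each increment via $\log(1+u)\le u$ together with $\vc{A}(\cdot)\preceq\vc{\Sigma}_0^\learner$, so that for a Gaussian feature $\x\sim N(\vc{0},\vc{\Sigma}_\source)$ one gets $\mathbb{E}\big[\log(1+\tfrac{1}{\sigma^2}\x^{\top}\vc{A}\x)\big] \le \tfrac{1}{\sigma^2}\mathrm{tr}(\vc{\Sigma}_0^\learner\vc{\Sigma}_\source) < \infty$. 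Thus each $g^{\learner}(\vc{n})$ is finite, and the nonnegativity of the increments (monotonicity) lets monotone convergence justify any exchange of expectation with the telescoping sum, should one prefer to pass through the infinite-sequence limit explicitly rather than truncating at a finite horizon $N\ge\max_{\source}n_{\source}$.

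The main obstacle I anticipate is not conceptual but a matter of making the coupling rigorous: one must verify that the conditional expectation $\mathbb{E}_{\vc{X}^{\learner}}[\,\cdot\mid\vc{n}^{\learner}]$ in the actual arrival process genuinely coincides with the unconditional expectation under the infinite-sequence law. This rests on the fact that, conditioned on $\vc{n}^{\learner}$, the received features are i.i.d.\ $N(\vc{0},\vc{\Sigma}_\source)$ and independent across sources, so the marginal distribution of the observed prefix is identical in the two models---the ``statistics are identical'' remark made when the process was introduced. Once this identity and the integrability bound above are in place, no interchange-of-limits subtlety remains, and monotonicity and DR-submodularity descend to $g^{\learner}$ immediately.
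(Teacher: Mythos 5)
Your proposal is correct, and its backbone is the same as the paper's: both proofs exploit the coupling in which the infinite feature sequence $\widetilde{\vc{X}^{\learner}}$ is drawn once, independently of $\vc{n}^{\learner}$, and lift the pointwise (per-realization) monotonicity and DR-submodularity of $G^{\learner}$ from the preceding lemma through the expectation. Where you genuinely differ is in the technical mechanism. The paper introduces the capped function $G_{n_0}(\widetilde{\vc{X}},\vc{n}) = G(\widetilde{\vc{X}},\proj_{n_0}(\vc{n}))$, argues that its expectation $g_{n_0}$ inherits DR-submodularity because it involves only finitely many features, and then observes that for any four points $\vc{n}$, $\vc{n}+k\vc{e}_{\source}$, $\vc{m}$, $\vc{m}+k\vc{e}_{\source}$ one may choose $n_0$ large enough that $g_{n_0}$ and $g$ coincide at all of them, so the inequalities transfer. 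You instead fix the four points from the outset---each $G(\widetilde{\vc{X}},\vc{n})$ already depends on only finitely many coordinates of the infinite sequence---and apply monotonicity and linearity of expectation directly, which is legitimate exactly because of the integrability check you supply: $\vc{A}(\vc{n})\preceq\vc{\Sigma}_0^\learner$ (the inverse being order-reversing on the semidefinite order) together with $\log(1+u)\le u$ gives $\mathbb{E}\bigl[\log\bigl(1+\x^{\top}\vc{A}\x/\sigma_{\source,\type}^2\bigr)\bigr]\le \mathrm{tr}(\vc{\Sigma}_0^\learner\vc{\Sigma}_\source)/\sigma_{\source,\type}^2<\infty$, so each $g^{\learner}(\vc{n})$ is finite and the subtraction of expectations in the DR inequality is well-defined. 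This buys you something the paper's write-up glosses over: the paper never verifies finiteness of $g^{\learner}$, which is needed even for its own rearrangement of expectations, so your version is the more airtight of the two, while the paper's projection device spares it from naming an explicit moment bound. Your closing caveat---that $\mathbb{E}_{\vc{X}^{\learner}}[\,\cdot\mid\vc{n}^{\learner}]$ in the actual arrival process coincides with the unconditional expectation under the infinite-sequence law---is precisely the ``equivalent data generation process with identical statistics'' remark with which the paper opens its appendix, so no gap remains on that front either.
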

\begin{proof}
Let's consider a projection: 
\begin{equation*}
    \proj_{n_0}(\vc{n}) = \tilde{\vc{n}} = [\tilde{n}_\source]_{\source\in \sources},
\end{equation*}
where
\begin{equation*}
    \tilde{n}_\source =
    \begin{cases}
    n_s,  \text{ if } n_s \le n_0 \\
    n_0,  \text{ otherwise.}
    \end{cases}
\end{equation*}
Then, consider a function: 
\begin{equation*}
G_{n_0}(\widetilde{\vc{X}},\vc{n}) = G(\widetilde{\vc{X}},\proj_{n_0}(\vc{n})) =
\begin{cases}
G(\widetilde{\vc{X}},\vc{n}),  \text{ if } \|\vc{n}\|_{\infty} \le n_0 \\
G(\widetilde{\vc{X}},\proj_{n_0}(\vc{n})),  \text{ otherwise.}
\end{cases}
\end{equation*}
It is easy to verify that $\forall\ \vc{n}, \vc{m}\in\naturals^{|\sources|}\ \text{and}\ \vc{n}\leq\vc{m}$, we have $G_{n_0}(\widetilde{\vc{X}},\vc{n} + k\vc{e}_s) - G_{n_0}(\widetilde{\vc{X}},\vc{n})\geq G_{n_0}(\widetilde{\vc{X}}, \vc{m} + k\vc{e}_s) - G_{n_0}(\widetilde{\vc{X}},\vc{m})
$. Thus, $G_{n_0}(\widetilde{\vc{X}},\vc{n})$ is DR-submodular w.r.t. $\vc{n}$.

Consider $g_{n_0}(\vc{n}) = \mathbb{E}_{\widetilde{\vc{X}}} [G_{n_0}(\widetilde{\vc{X}},\vc{n})|\vc{n}]$. As an expectation of 'finite $\widetilde{\vc{X}}$', $g_{n_0}(\vc{n})$ is still DR-submodular, i.e., $\forall\ \vc{n}, \vc{m}\in\naturals^{|\sources|}, \ \vc{n}\leq\vc{m}\ \text{and}\ \forall\ n_0 \in \naturals$, we have $g_{n_0}(\vc{n} + k\vc{e}_s) - g_{n_0}(\vc{n})\geq g_{n_0}(\vc{m} + k\vc{e}_s) - g_{n_0}(\vc{m})$. 
Observe that $\lim_{n_0 \to \infty} g_{n_0}(\vc{n}) = g(\vc{n})$, $\forall\ \vc{n}'\in\naturals^{|\sources|}$. In fact, $\forall\ \vc{n}'\in\naturals^{|\sources|}$, $\exists \ n_0(\vc{n}') = \|\vc{n}'\|_{\infty}$, s.t. $g_{n_0}(\vc{n}') = g(\vc{n}')$. Furthermore, $\exists n_0 = \max\{ \|\vc{n}\|_{\infty}, \|\vc{n} + k\vc{e}_s\|_{\infty}, \|\vc{m}\|_{\infty}, \|\vc{m} + k\vc{e}_s\|_{\infty}\}$, s.t. $g_{n_0}(\vc{n}) = g(\vc{n})$, $g_{n_0}(\vc{n} + k\vc{e}_s) = g(\vc{n} + k\vc{e}_s)$, $g_{n_0}(\vc{m}) = g(\vc{m})$, $g_{n_0}(\vc{m} + k\vc{e}_s) = g(\vc{m} + k\vc{e}_s)$, and $g(\vc{n} + k\vc{e}_s) - g(\vc{n}) \geq g(\vc{m} + k\vc{e}_s) - g(\vc{m})$.

Similarly, we could obtain the monotonicity.
\end{proof}
Then, we establish the positivity of the gradient and non-positivity of the Hessian of $U$, following Thm.~1 in \cite{liu2022experimental} and utilizing Lem.~\ref{lem: submodular} to prove our Thm.~\ref{thm: submodular}.
\begin{proof}
By the law of total expectation:
    \begin{equation*}
        U^{\learner} (\vc{\lambda}^{\learner}) = \mathbb{E}[g^{\learner}(\vc{n}^{\learner})] = \sum_{t=0}^\infty \mathbb{E}\left[g(\vc{n})|n_{\source}^{\learner}= t\right]\cdot \frac{(\lambda_{\source}^{\learner} T)^{t} e^{-\lambda_{\source}^{\learner} T}}{t!}.
    \end{equation*}
Thus the first partial derivatives are:
\begin{align*}
     \frac{\partial U}{\partial \lambda_{\source, \type}^{\p}} = &\frac{\partial U^{\learner}}{\partial \lambda_{\source, \type}^{\p}} 
    =\sum_{t=0}^\infty \mathbb{E}\left[g(\vc{n})|n_{\source}^{\learner}= t\right]\cdot (\frac{t}{\lambda_{\source}^{\learner}}-T)\frac{(\lambda_{\source}^{\learner} T)^{t} e^{-\lambda_{\source}^{\learner} T}}{t!} \\
    = & \sum_{t=0}^\infty(\mathbb{E}\left[g(\vc{n})|n_{\source}^{\learner}= t+1\right] - \mathbb{E}\left[g(\vc{n})|n_{\source}^{\learner}= t\right])\cdot 
    \frac{(\lambda_{\source}^{\learner})^t T^{t+1}}{t!}e^{-\lambda_{\source}^{\learner} T}\ge 0
\end{align*}
by monotonicity of $g$. Note that, here $\type = \type^{\learner}$ and $\learner = \p[-1]$. Next, we compute the second partial derivatives $\frac{\partial^2 U}{\partial \lambda_{\source, \type}^{\p} \partial \lambda_{\source', \type'}^{\p'}}$. It is easy to see that for $\learner \neq \learner'$, we have:
\begin{equation*}
    \frac{\partial^2 U}{\partial \lambda_{\source, \type}^{\p} \partial \lambda_{\source', \type'}^{\p'}} = 0.
\end{equation*}
For $\learner = \learner'$, which also implies $\type = \type'$, and $\source = \source'$. No matter $\p$ equals to $\p'$ or not, it holds that
\begin{align*}
    & \frac{\partial^2 U}{\partial \lambda_{\source, \type}^{\p} \partial \lambda_{\source, \type}^{\p'}} = \frac{\partial ^2 U^{\learner}}{\partial (\lambda_{\source, \type}^{\p})^2} \\
    = & (\mathbb{E}\left[g(\vc{n})|n_{\source}^{\learner}= 1\right] - \mathbb{E}\left[g(\vc{n})|n_{\source}^{\learner}= 0\right])\cdot (-T^2)e^{-\lambda_{\source}^{\learner} T} +  
    \sum_{t=1}^\infty(\mathbb{E}\left[g(\vc{n})|n_{\source}^{\learner}= t+1\right] - \mathbb{E}\left[g(\vc{n})|n_{\source}^{\learner}= t\right])\cdot\ldots \\&\quad(\frac{(\lambda_{\source}^{\learner})^{t-1}T^{t+1}}{(t-1)!} - \frac{(\lambda_{\source}^{\learner})^t T^{t+2}}{t!}) e^{-\lambda_{\source}^{\learner} T} \\ 
    = & \sum_{t=1}^\infty\left((\mathbb{E}\left[g(\vc{n})|n_{\source}^{\learner}= t+1\right] - \mathbb{E}\left[g(\vc{n})|n_{\source}^{\learner}= t\right]) - (\mathbb{E}\left[g(\vc{n})| \right. \right. \left. \left. n_{\source}^{\learner}= t\right] - \mathbb{E}\left[g(\vc{n})| n_{\source}^{\learner}= t-1\right])\right)\cdot \frac{(\lambda_{\source}^{\learner})^{t-1} T^{t+1}}{(t-1)!} e^{-\lambda_{\source}^{\learner} T}\\
    \le& 0,
\end{align*}
by the submodularity of $G$. For $\source\neq \source'$, which also implies $\p \neq \p'$
\begin{align*}
     \frac{\partial^2 U}{\partial \lambda_{\source, \type}^{\p} \partial \lambda_{\source', \type}^{\p'}} =&
    \frac{\partial^2 U^{\learner}}{\partial \lambda_{\source, \type}^{\p} \partial \lambda_{\source', \type}^{\p'}} 
    = \sum_{t=0}^\infty\sum_{k=0}^\infty \mathbb{E}\left[g(\vc{n})|n_{\source}^{\learner}= t,  n_{\source'}^{\learner} = k\right]\cdot\frac{\partial}{\partial\lambda_{\source}^{\learner}}\mathrm{P}(n_{\source}^{\learner} = t)\cdot 
    \frac{\partial}{\partial\lambda_{\source'}^{\learner}}\mathrm{P}(n_{\source'}^{\learner} = k) \\
    = & \sum_{t=0}^\infty\sum_{k=0}^\infty ((\mathbb{E}\left[g(\vc{n})|n_{\source}^{\learner}= t+1, n_{\source'}^{\learner} = k+1\right] - 
    \mathbb{E}\left[g(\vc{n})|n_{\source}^{\learner}= t, n_{\source'}^{\learner} = k+1\right])  - \ldots\\
    &\quad (\mathbb{E}\left[g(\vc{n})|n_{\source}^{\learner}= t+1, \right.
    \left. n_{\source'}^{\learner} = k\right] -\mathbb{E}\left[g(\vc{n})|n_{\source}^{\learner}= t, n_{\source'}^{\learner} = k\right]))\cdot
    \frac{(\lambda_{\source}^{\learner})^t (\lambda_{\source'}^{\learner})^k T^{t+k+2}}{t!k!}  e^{-(\lambda_{\source}^{\learner}+\lambda_{\source'}^{\learner}) T}\\\le& 0
\end{align*}
also by the submodularity of $g$.
\end{proof}

\section{Proof of Lemma \ref{lem: HEAD bound4}}
\label{app: proof_HEAD bound4}

We know that Poisson has a subexponential tail bound~\cite{ccanonne2017note}, then, after truncating, the $\mathrm{HEAD}$, defined in Eq. \eqref{eq: HEAD} is guaranteed to be within a constant factor from the true partial derivative:
\begin{lemma}[Lemma 4 in \cite{liu2022experimental}]
\label{lem: HEAD bound}
For $h(u) = 2\frac{(1+u)\ln{(1+u)}-u}{u^2}$ and $n'\geq \lambda_{\source}^{\learner} T$, we have:
   $$ \frac{\partial U}{\partial\lambda_{\source, \type}^{\p}} \ge \mathrm{HEAD}_{\source, \type}^{\p}(n')
    \ge 
    (1 - \prob[n^\learner_{\source}\geq n'+1]) \frac{\partial U}{\partial\lambda_{\source, \type}^{\p}}.$$
\end{lemma}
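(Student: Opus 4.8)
The plan is to treat this as a pure truncation statement about the non-negative series defining $\frac{\partial U}{\partial\lambda_{\source, \type}^{\p}}$, leveraging the two structural facts already established in the proof of Thm.~\ref{thm: submodular}: that the discrete marginal $\Delta_{\source}^{\learner}(\vc{\lambda}^\learner,n)$ is (i) non-negative and (ii) non-increasing in $n$. Both follow from Lem.~\ref{lem: submodular}: non-negativity is the monotonicity of $g^{\learner}$, while the non-increasing property is exactly its DR-submodularity applied along the $\source$-coordinate, noting that the outer expectation over the arrivals from the other sources (independent of $n_\source^\learner$) preserves both properties. The function $h(u)$ in the hypothesis is not needed for the two inequalities themselves; it only enters when one subsequently bounds the tail mass $\prob[n^\learner_\source \ge n'+1]$ via the Poisson Chernoff bound.

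For the upper inequality $\frac{\partial U}{\partial\lambda_{\source, \type}^{\p}} \ge \mathrm{HEAD}_{\source, \type}^{\p}(n')$, I would write the gap as the tail series
\begin{equation*}
\frac{\partial U}{\partial\lambda_{\source, \type}^{\p}} - \mathrm{HEAD}_{\source, \type}^{\p}(n') = \sum_{n=n'+1}^{\infty} \Delta_{\source}^{\learner}(\vc{\lambda}^\learner,n)\,\prob[n_{\source}^{\learner}= n]\,T,
\end{equation*}
and observe it is a sum of non-negative terms (each $\Delta \ge 0$, each probability and $T$ non-negative), hence non-negative. This step is immediate.

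The lower inequality is the substantive one. After dividing through, it is equivalent to
\begin{equation*}
\mathbb{E}\big[\Delta_{\source}^{\learner}(\vc{\lambda}^\learner, n_\source^\learner)\,\big|\, n_\source^\learner \ge n'+1\big] \le \mathbb{E}\big[\Delta_{\source}^{\learner}(\vc{\lambda}^\learner, n_\source^\learner)\big],
\end{equation*}
i.e.\ the conditional mean of the marginal over the truncated tail does not exceed its unconditional mean. I would prove this by a stochastic-dominance argument: letting $X = n_\source^\learner$ and $Y$ denote its law conditioned on $X \ge n'+1$, a direct check of tail probabilities shows $\prob[Y \ge t] \ge \prob[X \ge t]$ for every $t$ (trivially $1$ for $t \le n'+1$, and $\prob[X\ge t]/\prob[X\ge n'+1] \ge \prob[X\ge t]$ for $t > n'+1$), so $Y$ stochastically dominates $X$. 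Since $\Delta_{\source}^{\learner}(\vc{\lambda}^\learner,\cdot)$ is non-increasing, expectations of it reverse under stochastic dominance, giving $\mathbb{E}[\Delta(Y)] \le \mathbb{E}[\Delta(X)]$, which is exactly the desired inequality.

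The main obstacle is establishing cleanly that $\Delta_{\source}^{\learner}(\vc{\lambda}^\learner,n)$ inherits monotonicity in $n$ from the DR-submodularity of $g^{\learner}$ after the outer expectation over the other sources' arrivals is taken; once that coupling is handled using independence of the Poisson components, the stochastic-dominance step is routine and the upper bound is trivial. This mirrors Lem.~4 of \cite{liu2022experimental}, now instantiated for the Gaussian-source marginal $\Delta_{\source}^{\learner}$ whose requisite sign and monotonicity were supplied by Lem.~\ref{lem: submodular}.
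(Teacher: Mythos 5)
Your proof is correct, but there is nothing in the paper to compare it against step-by-step: the paper does not prove this lemma at all, it imports it verbatim as Lemma~4 of \cite{liu2022experimental} and uses it as a black box. This makes your argument genuinely useful rather than redundant, because the citation is slightly loose as stated --- the objective here involves Gaussian features and heteroskedastic noise, so Liu et al.'s Lemma~4 does not literally apply to the present $\Delta_{\source}^{\learner}$, and what licenses the transfer is exactly the structure you isolate: $\Delta_{\source}^{\learner}(\vc{\lambda}^\learner,n)$ is non-negative and non-increasing in $n$, which follows from Lem.~\ref{lem: submodular} (monotonicity and DR-submodularity of $g^\learner(\vc{n})=\mathbb{E}_{\vc{X}}[G^\learner(\vc{X},\vc{n})\mid\vc{n}]$) together with independence of the Poisson coordinates, since conditioning on $n_{\source}^{\learner}=n$ leaves the law of the other coordinates unchanged and the outer expectation therefore preserves the pointwise sign and monotonicity of the discrete marginal. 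Given that, your two steps are sound: the upper inequality is the non-negativity of the tail series, and the lower inequality reduces to $\mathbb{E}\bigl[\Delta(X)\mathbf{1}\{X\ge n'+1\}\bigr]\le \prob[X\ge n'+1]\,\mathbb{E}[\Delta(X)]$, which your stochastic-dominance check establishes (equivalently, this is Chebyshev's association inequality for the oppositely monotone pair $\Delta(\cdot)$ and $\mathbf{1}\{\cdot\ge n'+1\}$, with no conditioning needed). Two observations worth recording: first, your proof nowhere uses the hypotheses $n'\ge \lambda_{\source}^{\learner}T$ or the function $h(u)$ --- both are vestigial in the statement and matter only downstream, when $\prob[n_{\source}^{\learner}\ge n'+1]$ is bounded by the Poisson Chernoff bound of \cite{ccanonne2017note} in the proof of Thm.~\ref{thm: FW}, where $u=(n'-\lambda_{\mathrm{MAX}}T)/(\lambda_{\mathrm{MAX}}T)\ge 0$ is required; second, for completeness you should dispose of the degenerate case $\prob[X\ge n'+1]=0$ (where the claim is trivial) before dividing to form the conditional mean, and note that $\mathbb{E}[\Delta(X)]<\infty$ since $0\le\Delta(n)\le\Delta(0)$, so all series manipulations are justified.
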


To quantify the distance between the estimated gradient and $\mathrm{HEAD}$, we first introduce two auxiliary lemmas as follows.
By Chernoff bounds described by Thm~A.1.16 in \cite{alon2004probabilistic}, we get:
\begin{lemma}
\label{lem: HEAD bound2}
If there exists a constant vector $\vc{c} = [c_{\source}]_{\source \in \sources} \in \reals^{|\sources|}$, such that for any $\source\in\sources$ and $i \le n_{\source}^{\learner}$, $\|\x^{\learner}_{\source,i}\|_2 \le c_{\source}$, then:
\begin{equation*}
 \left|\widehat{\frac{\partial U}{\partial\lambda_{\source, \type}^{\p}} } - \mathrm{HEAD}_{\source, \type}^{\p} \right| \le \gamma \max_{\learner \in \learners, \source \in \sources} \log(1 + \frac{\lambda_{\mathrm{MAX}}(\vc{\Sigma}_0^\learner) c_{\source}^2}{\sigma_{\source,\type}^2} ),
\end{equation*}
with probability greater than $1-2\cdot e^{-\gamma^2 N_1 N_2/2T^2(n'+1)}$, where $\lambda_{\mathrm{MAX}}(\vc{\Sigma}_0^\learner)$ is the maximum eigenvalue of matrix $\vc{\Sigma}_0^\learner$.
\end{lemma}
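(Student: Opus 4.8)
The plan is to write the estimation error as a normalized sum over the $N_1 N_2$ joint samples and control it by a Chernoff bound, the crux being that every summand is a \emph{bounded} marginal gain. Writing $B \coloneqq \max_{\learner\in\learners,\source\in\sources}\log\!\big(1 + \lambda_{\mathrm{MAX}}(\vc{\Sigma}_0^\learner)c_\source^2/\sigma_{\source,\type}^2\big)$, I would first show that each per-sample difference appearing in $\widehat{\Delta_{\source}^{\learner}(\vc{\lambda}^\learner,n)}$ lies in $[0,B]$. Fix a sample $(\vc{n}^{\learner,j},\vc{X}^{\learner,j,k})$ and $n\le n'$, and consider the increment $G^\learner(\vc{X},\vc{n}|_{n_{\source}^{\learner}=n+1}) - G^\learner(\vc{X},\vc{n}|_{n_{\source}^{\learner}=n})$. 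Exactly as in the Sylvester determinant identity computation of App.~\ref{app: proof_submodular}, this equals $\log\!\big(1 + \tfrac{1}{\sigma_{\source,\type}^2}\,(\x^{\learner}_{\source,n+1})^\top A(\vc{n})\,\x^{\learner}_{\source,n+1}\big)$, where $A(\vc{n}) = \big(\sum_{\source}\sum_i \tfrac{\x_{\source,i}\x_{\source,i}^\top}{\sigma_{\source,\type}^2} + (\vc{\Sigma}_0^\learner)^{-1}\big)^{-1}$.

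Boundedness then follows from two observations. Nonnegativity is immediate since $A(\vc{n})\succeq 0$. For the upper bound I would use $A(\vc{n})\preceq \vc{\Sigma}_0^\learner$ (dropping the nonnegative data terms from the inverted matrix can only increase its inverse in the PSD order), so that $(\x^{\learner}_{\source,n+1})^\top A(\vc{n})\,\x^{\learner}_{\source,n+1}\le \lambda_{\mathrm{MAX}}(\vc{\Sigma}_0^\learner)\,\|\x^{\learner}_{\source,n+1}\|_2^2\le \lambda_{\mathrm{MAX}}(\vc{\Sigma}_0^\learner)\,c_\source^2$, invoking the hypothesis $\|\x^{\learner}_{\source,i}\|_2\le c_\source$. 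Monotonicity of $\log$ and the maximization in the definition of $B$ then give the range $[0,B]$ for every summand, uniformly over $j,k,n$.

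With bounded summands in hand, I would center and recombine: $\widehat{\frac{\partial U}{\partial\lambda_{\source,\type}^{\p}}} - \mathrm{HEAD}_{\source,\type}^{\p} = \tfrac{1}{N_1 N_2}\sum_{j,k}\xi_{j,k}$, where $\xi_{j,k} = T\sum_{n=0}^{n'}\big(Z^{(n)}_{j,k}-\Delta_{\source}^{\learner}(\vc{\lambda}^\learner,n)\big)\prob[n_{\source}^{\learner}=n]$ and $Z^{(n)}_{j,k}\in[0,B]$ is the increment above. Since each $\xi_{j,k}$ is a sum of $n'+1$ terms, each lying in an interval of length at most $TB$, Hoeffding's lemma makes $\xi_{j,k}$ sub-Gaussian with parameter proportional to $(n'+1)(TB)^2$, while $\mathbb{E}[\xi_{j,k}]=0$ by construction. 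Applying the Chernoff bound for bounded variables (Thm.~A.1.16 in \cite{alon2004probabilistic}) to the average of the $N_1 N_2$ samples, with deviation $\gamma B$, yields a failure probability of the claimed form $2\,e^{-\gamma^2 N_1 N_2/2T^2(n'+1)}$ (the conservative constant absorbing the crude per-increment range bound).

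The main obstacle is the boundedness step — converting the difference of two $\log\det$ terms into the single scalar $\log(1+\cdots)$ via the Sylvester identity and then establishing $A(\vc{n})\preceq\vc{\Sigma}_0^\learner$ — since this is precisely what turns an a priori unbounded objective into a bounded random variable and supplies the constant $B$ entering the bound. A secondary subtlety is the dependence structure of the two-stage sampling: the feature draws $\vc{X}^{\learner,j,k}$ share the arrival draw $\vc{n}^{\learner,j}$ across $k$. I would handle this by applying the Chernoff/Hoeffding estimate directly to the centered summands $\xi_{j,k}$, exactly in the form of \cite{alon2004probabilistic}, so that only their uniform boundedness and zero mean are needed to obtain the stated concentration.
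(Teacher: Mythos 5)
Your proposal follows the paper's proof essentially step for step: the paper likewise reduces boundedness to the Sylvester-identity increment $\log\bigl(1+\tfrac{1}{\sigma_{\source,\type}^2}\x^\top A(\vc{n})\x\bigr)$, bounded by the $\vc{n}=\vc{0}$ increment $G_{\mathrm{MAX}}\le \max_{\learner\in\learners,\source\in\sources}\log\bigl(1+\lambda_{\mathrm{MAX}}(\vc{\Sigma}_0^\learner)c_\source^2/\sigma_{\source,\type}^2\bigr)$ (exactly your $A(\vc{n})\preceq\vc{\Sigma}_0^\learner$ observation), and then applies the same Chernoff bound (Thm.~A.1.16 of Alon--Spencer) to the $N_1N_2(n'+1)$ centered, normalized summands with deviation $c=\gamma N_1N_2/T$, yielding the identical failure probability $2e^{-\gamma^2N_1N_2/2T^2(n'+1)}$. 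Your per-$(j,k)$ grouping via Hoeffding is the same calculation in different packaging, and it even inherits the same implicit independence treatment of the terms sharing a sample that the paper uses, so this is essentially the paper's argument.
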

\begin{proof}
We define 
\begin{align*}
    X^{j,k}(n) = & \frac{G^{\learner}(\vc{X}^{\learner,j,k},\vc{n}^{\learner,j}|_{n_{\source}^{\learner,j} = n+1}) - G^{\learner}(\vc{X}^{\learner,j,k},\vc{n}^{\learner,j}|_{n_{\source}^{\learner,j} = n})  - \Delta_{\source}^{\learner}(\vc{\lambda}^\learner,n)}{G_\mathrm{MAX}},
\end{align*}
where $G_{\mathrm{MAX}} = \max_{\learner \in \learners, \source\in \sources, \x\in B(\vc{0};c)}(G^{\learner}(\x, \vc{e}_\source)-G^{\learner}(\x, \vc{0})) = \max_{\learner \in \learners,\source \in \sources , \x\in B(\vc{0};c)} \log \big( 1 + \frac{1}{\sigma_{\source,\type}^2} \x_{\source}^{\top} \vc{\Sigma_\learner} \x_{\source} \big)$. We have $|X^{j,k}(n)| \le 1$, because:
\begin{align*}
    G_{\mathrm{MAX}} & \ge (G^{\learner}(\x, \vc{e}_\source)-G^{\learner}(\x, \vc{0})) \ge G(\vc{X}^{\learner,j,k},\vc{n}^{j}|_{n_{\source}^{\learner,j} = n+1}) - G(\vc{X}^{\learner,j,k},\vc{n}^{j}|_{n_{\source}^{\learner,j} = n}),
\end{align*}
for any $\learner\in\learners$, $\source\in \sources$, $\x\in B(\vc{0};c)$, $n\geq 0$. By Chernoff bounds described by Theorem A.1.16 in \cite{alon2004probabilistic}, we have
$$\prob\left[\left| \sum_{j=1}^{N_1} \sum_{k=1}^{N_2} \sum_{n = 0}^{n = n^{\prime}} X^{j,k}(n) \right|> c\right] \leq 2e^{-c^2/2N_1N_2(n^{'}+1)}.$$ 
Suppose we let $c = \gamma \cdot N_1N_2/T$, where $\gamma$ is the step size, then we have 
\begin{align*}
\left|\widehat{\frac{\partial U}{\partial\lambda_{\source, \type}^{ \p}} } - \mathrm{HEAD}_{\source, \type}^{ \p} \right|  \leq & \left|\sum_{n=0}^{n^{\prime}} \sum_{j=1}^{N_1} \sum_{k=1}^{N_2} \frac{G^{\learner}(\vc{X}^{\learner,j,k},\vc{n}^{\learner,j}|_{n_{\source}^{\learner,j} = n+1}) - G^{\learner}(\vc{X}^{\learner,j,k},\vc{n}^{\learner,j}|_{n_{\source}^{\learner,j} = n}) - \Delta_{\source}^{\learner}(\vc{\lambda}^\learner,n)}{N_1 N_2} T\right| \\
= & \left|\sum_{t=0}^{n^{\prime}} \sum_{j=1}^{N_1} \sum_{k=1}^{N_2} X^{j,k}(n)\right|\cdot \frac{T}{N_1 N_2}\cdot G_\mathrm{MAX}\notag \leq \gamma \cdot G_\mathrm{MAX} \\
= & \gamma \max_{\learner \in \learners,\source \in \sources , \x\in B(\vc{0};c)} \log \big( 1 + \frac{1}{\sigma_{\source,\type}^2} \x_{\source}^{\top} \vc{\Sigma_\learner} \x_{\source} \big) \\
\le & \gamma \max_{\learner \in \learners, \source \in \sources , \x\in B(\vc{0};c)} \log(1 + \frac{1}{\sigma_{\source,\type}^2} \lambda_{\mathrm{MAX}}(\vc{\Sigma}_\learner)\|\x_{\source}\|_2^2) \\
= & \gamma \max_{\learner \in \learners, \source \in \sources} \log(1 + \frac{\lambda_{\mathrm{MAX}}(\vc{\Sigma}_\learner)c_{\source}^2}{\sigma_{\source,\type}^2} ),
\end{align*}
with probability greater than $1-2\cdot e^{-\gamma^2 N_1 N_2/2T^2(n'+1)}$, and the last inequality holds because $\lambda_{\mathrm{MIN}}(\Sigma_{\learner})|\x|^2 \le \x^{\top} \Sigma_{\learner} \x \le \lambda_{\mathrm{MAX}}(\Sigma_{\learner})|\x|^2$, for any $\x$, by \cite{horn2012matrix}.
\end{proof}

By sub-Gaussian norm bound \cite{Lectures36709S19}, we get:
\begin{lemma}
[Theorem 8.3 in \cite{Lectures36709S19}]
\label{lem: sub_Gaussian bound}
If $\x \sim N(0,\vc{\Sigma})$, for any $\delta \in (0,1)$:
\begin{equation*}
\begin{split}
 \prob\left[\|\x\|_2\le 4 \sqrt{\lambda_{\mathrm{MAX}}(\vc{\Sigma})} \sqrt{d} + 2 \sqrt{\lambda_{\mathrm{MAX}}(\vc{\Sigma})} \sqrt{\log \frac{1}{\delta}}\right] \ge 1 - \delta,
\end{split}
\end{equation*}
where $d$ is the dimension of $\x$.
\end{lemma}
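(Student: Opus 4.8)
The plan is to first strip away the anisotropy of $\vc{\Sigma}$ by a linear change of variables, and then establish the concentration of the norm of a \emph{standard} Gaussian vector through a Chernoff bound on its (chi-squared) squared norm. This two-step structure cleanly separates the role of $\lambda_{\mathrm{MAX}}(\vc{\Sigma})$ (which enters only through the reduction) from the dimension-dependent tail (which is the analytic core).

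First I would write $\x = \vc{\Sigma}^{1/2}\vc{z}$ with $\vc{z}\sim N(\vc{0},\vc{I}_d)$, so that $\|\x\|_2 = \|\vc{\Sigma}^{1/2}\vc{z}\|_2 \le \|\vc{\Sigma}^{1/2}\|_{\mathrm{op}}\,\|\vc{z}\|_2 = \sqrt{\lambda_{\mathrm{MAX}}(\vc{\Sigma})}\,\|\vc{z}\|_2$, where $\|\cdot\|_{\mathrm{op}}$ denotes the operator norm. Consequently the event $\{\|\vc{z}\|_2 \le a\}$ is contained in $\{\|\x\|_2 \le \sqrt{\lambda_{\mathrm{MAX}}(\vc{\Sigma})}\,a\}$, so it suffices to prove that, with $a = 4\sqrt{d} + 2\sqrt{\log(1/\delta)}$, we have $\prob[\|\vc{z}\|_2 \le a] \ge 1-\delta$; multiplying through by $\sqrt{\lambda_{\mathrm{MAX}}(\vc{\Sigma})}$ then reproduces the stated right-hand side verbatim, which is precisely why the anisotropic case needs no extra work.

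Next I would control $\|\vc{z}\|_2^2 = \sum_{i=1}^d z_i^2$, a chi-squared variable with $d$ degrees of freedom. Using $\mathbb{E}[e^{\eta z_i^2}] = (1-2\eta)^{-1/2}$ for $\eta\in[0,1/2)$ together with independence gives $\mathbb{E}[e^{\eta\|\vc{z}\|_2^2}] = (1-2\eta)^{-d/2}$, and a Chernoff argument with the optimal $\eta$ yields the Laurent--Massart tail $\prob[\|\vc{z}\|_2^2 \ge d + 2\sqrt{dt} + 2t] \le e^{-t}$. Setting $t = \log(1/\delta)$ and applying $2\sqrt{dt}\le d+t$ (AM--GM) gives $\|\vc{z}\|_2^2 \le 2d + 3\log(1/\delta)$ on an event of probability at least $1-\delta$; taking square roots and using $\sqrt{u+v}\le\sqrt{u}+\sqrt{v}$ bounds $\|\vc{z}\|_2 \le \sqrt{2}\sqrt{d} + \sqrt{3}\sqrt{\log(1/\delta)} \le 4\sqrt{d} + 2\sqrt{\log(1/\delta)}$, since $\sqrt{2} < 4$ and $\sqrt{3} < 2$. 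Combined with the reduction above, this establishes the claim.

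The main obstacle is essentially bookkeeping rather than depth: the only genuinely analytic ingredient is the one-dimensional Gaussian moment generating function and the Chernoff optimization behind the chi-squared tail, which is routine; the remaining care lies in collapsing the three-term bound $d + 2\sqrt{dt} + 2t$ into the clean stated form while ensuring the (deliberately loose) constants $4$ and $2$ are respected, where the AM--GM step and subadditivity of the square root do the work. An alternative, even shorter route avoids the chi-squared computation: since $\vc{z}\mapsto\|\vc{z}\|_2$ is $1$-Lipschitz, Gaussian concentration gives $\prob[\|\vc{z}\|_2 \ge \mathbb{E}\|\vc{z}\|_2 + s]\le e^{-s^2/2}$, and with $\mathbb{E}\|\vc{z}\|_2 \le \sqrt{\mathbb{E}\|\vc{z}\|_2^2} = \sqrt{d}$ and $s = \sqrt{2\log(1/\delta)}$ one again obtains a bound dominated by $4\sqrt{d} + 2\sqrt{\log(1/\delta)}$; this trades the elementary Chernoff step for the Borell--TIS Gaussian concentration inequality as a black box.
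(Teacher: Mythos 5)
Your proof is correct, but it is worth noting that the paper contains no proof of this lemma at all: the statement is imported verbatim as Theorem~8.3 of the cited lecture notes \cite{Lectures36709S19}, i.e., it is a black-box sub-Gaussian norm bound, not a result the authors establish. Your argument is therefore a genuinely different (and more elementary, Gaussianity-specific) route, and it checks out in every step: the whitening $\x=\vc{\Sigma}^{1/2}\vc{z}$ with $\|\vc{\Sigma}^{1/2}\|_{\mathrm{op}}=\sqrt{\lambda_{\mathrm{MAX}}(\vc{\Sigma})}$ correctly isolates the anisotropy (and works even for singular $\vc{\Sigma}$, since the event inclusion $\{\|\vc{z}\|_2\le a\}\subseteq\{\|\x\|_2\le \sqrt{\lambda_{\mathrm{MAX}}(\vc{\Sigma})}\,a\}$ holds pointwise on the coupling); the Laurent--Massart tail $\prob\left[\|\vc{z}\|_2^2\ge d+2\sqrt{dt}+2t\right]\le e^{-t}$ follows from the stated chi-squared moment generating function by a routine Chernoff optimization; and the bookkeeping via AM--GM and subadditivity of the square root yields $\|\vc{z}\|_2\le\sqrt{2}\sqrt{d}+\sqrt{3}\sqrt{\log(1/\delta)}$ with probability at least $1-\delta$, which sits comfortably inside the stated constants $4$ and $2$. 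Your Borell--TIS alternative, giving $\sqrt{d}+\sqrt{2\log(1/\delta)}$, dominates even more easily. What your version buys is a self-contained proof that makes explicit how loose the constants are (something the paper's citation obscures); what the citation buys the authors is generality, since the lecture-notes result is stated for sub-Gaussian vectors via general sub-Gaussian concentration machinery rather than the exact Gaussian structure---generality that is unnecessary here, as the paper's sources are exactly Gaussian by assumption.
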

Then, we bound the distance between the estimated gradient and $\mathrm{HEAD}$, and prove it by law of total probability theorem.
\begin{lemma}
\label{lem: HEAD bound3}
For any $\delta \in (0,1)$, and $n'\geq \lambda_{\source}^{\learner} T$,
\begin{equation*}
    \big|\widehat{\frac{\partial U}{\partial\lambda_{\source, \type}^{\p}} } - \mathrm{HEAD}_{\source, \type}^{\p} \big| \le \gamma \max_{\learner \in \learners, \source \in \sources} \log(1 +  \frac{\lambda_{\mathrm{MAX}}(\vc{\Sigma}_0^\learner)c_\source^2}{\sigma_{\source,\type}^2}),
\end{equation*}
where  $c_\source = 4 \sqrt{\lambda_{\mathrm{MAX}}(\vc{\Sigma}_\source)} \sqrt{d} + 2 \sqrt{\lambda_{\mathrm{MAX}}(\vc{\Sigma}_\source)} \sqrt{\log \frac{1}{\delta}}$, $\gamma$ is stepsize, and $d$ is the dimension of feature $\x$, with probability greater than $1-2 \cdot e^{-\gamma^2 N_1 N_2/2T^2(n'+1)}- |\sources|n'\delta - (|\sources|-1)\delta_{\source, \type}^{\p}$, and $\delta_{\source, \type}^{\p} = \max_{\source'\in \sources\setminus\{\source\}} \prob[n^\learner_{\source'}\geq n'+1] $.
\end{lemma}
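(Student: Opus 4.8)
The plan is to read Lemma~\ref{lem: HEAD bound2} as a \emph{conditional} statement: it already delivers exactly the target inequality, but only under the deterministic hypothesis that every sampled feature vector has norm at most $c_\source$. The task therefore reduces to discharging that hypothesis with high probability using the sub-Gaussian norm bound of Lemma~\ref{lem: sub_Gaussian bound}. Concretely, I would introduce the ``good event'' $E$ that every feature vector $\x_{\source',i}^{\learner}$ appearing in any of the sampled matrices $\vc{X}^{\learner,j,k}$ that enter $\widehat{\partial U/\partial\lambda_{\source,\type}^{\p}}$ satisfies $\|\x_{\source',i}^{\learner}\|_2\le c_{\source'}$, where $c_{\source'}=4\sqrt{\lambda_{\mathrm{MAX}}(\vc{\Sigma}_{\source'})}\sqrt{d}+2\sqrt{\lambda_{\mathrm{MAX}}(\vc{\Sigma}_{\source'})}\sqrt{\log(1/\delta)}$. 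On $E$, the hypothesis of Lemma~\ref{lem: HEAD bound2} holds verbatim, so that lemma yields the stated bound with conditional probability at least $1-2e^{-\gamma^2 N_1 N_2/2T^2(n'+1)}$.

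The crux is estimating $\prob[E^{c}]$. For a single fixed feature, Lemma~\ref{lem: sub_Gaussian bound} gives $\|\x_{\source',i}^{\learner}\|_2\le c_{\source'}$ with probability at least $1-\delta$, so I would take a union bound over precisely the features that enter the estimator. The differentiation source $\source$ is the benign case: its count is set deterministically to $n$ or $n+1$ with $n\le n'$, so only $O(n')$ of its features are ever used, and a union bound over these contributes an $n'\delta$ term. The $|\sources|-1$ remaining sources $\source'\neq\source$ are the delicate case, precisely because features and arrivals are jointly distributed: the number $n_{\source'}^{\learner,j}$ of their features is itself a Poisson sample that is \emph{not} truncated at $n'$. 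Conditioning on $\{n_{\source'}^{\learner,j}\le n'\}$ lets me union-bound over at most $n'$ features per source (another $n'\delta$ each), while the complementary event $\{n_{\source'}^{\learner}\ge n'+1\}$ is absorbed into $\delta_{\source,\type}^{\p}=\max_{\source'\neq\source}\prob[n_{\source'}^{\learner}\ge n'+1]$. Collecting the $|\sources|$ per-source union bounds and the $|\sources|-1$ Poisson-tail contributions gives $\prob[E^{c}]\le |\sources|n'\delta+(|\sources|-1)\delta_{\source,\type}^{\p}$.

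Finally, I would combine the two estimates by the law of total probability. Writing $A$ for the event that the inequality of Lemma~\ref{lem: HEAD bound2} holds, we have $\prob[A]\ge\prob[A\cap E]=\prob[A\mid E]\prob[E]\ge\prob[A\mid E]-\prob[E^{c}]$, so the target inequality holds with probability at least $1-2e^{-\gamma^2 N_1 N_2/2T^2(n'+1)}-|\sources|n'\delta-(|\sources|-1)\delta_{\source,\type}^{\p}$, exactly as claimed.

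I expect the main obstacle to be the bookkeeping in the second paragraph: isolating why the differentiation source contributes no Poisson-tail term while every other source does, and confirming that on $E$ the per-sample summands $X^{j,k}(n)$ stay bounded by $1$ so that the Chernoff step inside Lemma~\ref{lem: HEAD bound2} remains legitimate. The one genuinely non-mechanical point is reconciling the centering: each realized marginal gain is bounded on $E$, but $\Delta_\source^{\learner}(\vc{\lambda}^\learner,n)$ is a \emph{true} expectation taken over the untruncated Gaussian law; care is needed to ensure the difference of the two still obeys $|X^{j,k}(n)|\le 1$ on the good event. Everything else is a union bound layered on top of the sub-Gaussian tail.
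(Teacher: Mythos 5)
Your proposal matches the paper's proof essentially verbatim: the paper likewise conditions on the good event (its events $B$ and $C$, bounding feature norms via Lem.~\ref{lem: sub_Gaussian bound} and bounding the counts of the non-differentiation sources via the Poisson tail, the differentiation source's count being fixed by truncation), applies Lem.~\ref{lem: HEAD bound2} on that event, and combines via the law of total probability---using Bernoulli's inequality on the product $(1-\delta)^{|\sources|n'}\prod_{\source'\in\sources\setminus\{\source\}}(1-\prob[n^\learner_{\source'}\ge n'+1])$ where you use an additive union bound, which is equivalent. The centering subtlety you flag is glossed over in the paper as well: its Lem.~\ref{lem: HEAD bound2} simply defines $G_{\mathrm{MAX}}$ as a maximum over the ball $B(\vc{0};c)$ and asserts $|X^{j,k}(n)|\le 1$ under the bounded-feature hypothesis.
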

\begin{proof}
According to law of total probability theorem, where event $A = \left|\widehat{\frac{\partial U}{\partial\lambda_{\source, \type}^{ \p}} } - \mathrm{HEAD}_{\source, \type}^{ \p} \right| > \gamma \max_{\learner \in \learners, \source \in \sources} \log(1 + \\ \frac{\lambda_{\mathrm{MAX}}(\vc{\Sigma}_\learner)c_{\source}^2}{\sigma_{\source,\type}^2})$, event $B=\{\|\x^{\learner}_{\source,i}\|_2 \le c_{\source}: \forall \source\in\sources,i \le n_{\source}^{\learner}\}$, and event $C=\{n_{\source}^{\learner} \le n': \forall \source\in\sources\}$, the probability of event A is bounded by:
\begin{align*}
    \prob(A) = & \prob(A|B \cap C)\prob(B \cap C) + \prob(A|\overline{B \cap C})\prob(\overline{B \cap C}) 
    \le \prob(A|B \cap C)\prob(B \cap C) + \prob(\overline{B \cap C}).
\end{align*}
Then, the probability of its complement is bounded by:
\begin{align*}
   \prob(\overline{A}) \ge & 1 - \left(\prob(A|B \cap C)\prob(B \cap C) + \prob(\overline{B \cap C})\right)
   = \prob(B\cap C) - \prob(A|B \cap C)\prob(B \cap C) \\
   = & ~\prob(\overline{A}|B \cap C)\prob(B\cap C) 
   \ge (1-2\cdot e^{-\gamma^2 N_1 N_2/2T^2(n'+1)}) \cdot (1-\delta)^{|\sources|n'} \cdot  \prod_{\source'\in \sources\setminus\{\source\}} (1-\prob[n^\learner_{\source'}\geq n'+1]),
\end{align*}
where the first term comes from Lem.~\ref{lem: HEAD bound2}, the second term is by Lem.~\ref{lem: sub_Gaussian bound}, and the last term derives from the condition, where the number of arrivals $n_{\source}^{\learner}$ from source $\source$ is less then or equal to $n'$. Since truncating has forced one coordinate (coordinate $s$) to be less then or equal to $n'$ already, the condition is applied to the left sources.

The probability can be further simplified by Bernoulli's inequality:
\begin{align*}
    \prob(\overline{A}) \ge & (1-2\cdot e^{-\gamma^2 N_1 N_2/2T^2(n'+1)}) \cdot (1-|\sources|n'\delta) \cdot  (1-(|\sources|-1)\delta_{\source, \type}^{ \p}) \\
    \ge & 1 - 2\cdot e^{-\gamma^2 N_1 N_2/2T^2(n'+1)} - |\sources|n'\delta - (|\sources|-1)\delta_{\source, \type}^{ \p},
\end{align*}
where $\delta_{\source, \type}^{ \p} = \max_{\source'\in \sources\setminus\{\source\}}\{\prob[n^\learner_{\source'}\geq n'+1]\}$.
\end{proof}

By Lems.~\ref{lem: HEAD bound} and \ref{lem: HEAD bound3}, we instantly get the distance between the estimated and true gradient, shown in Lem. \ref{lem: HEAD bound4}

\section{Proof of Theorem \ref{thm: FW}}
\label{app: proof_FW}

We bound the quality of our gradient estimator by Lem.~\ref{lem: HEAD bound4}:
\begin{lemma}
\label{lem: estimation guarantee} 
At each iteration $k$,  with probability greater than $1-(2\cdot e^{-\gamma^2 N_1 N_2/2T^2(n'+1)} + |\sources|n'\delta + (|\sources|-1)\mathrm{P}_\mathrm{MAX})\cdot P_\SR$,
\begin{equation}
\label{eq: estimator guarantee full}
    \langle \vc{v}(k),\nabla U(\mathcal{\vc{\lambda}}(k))\rangle \geq a\cdot\max_{\vc{v}\in\mathcal{D}}\langle \vc{v},\nabla U(\mathcal{\vc{\lambda}}(k))\rangle - b,
\end{equation}
where 
\begin{align}
    &a = 1- \mathrm{P}_\mathrm{MAX}, \quad \text{and}\\
    &b = 2\lambda_{\mathrm{MAX}} \cdot \gamma \max_{\learner \in \learners, \source \in \sources} \log(1 + \frac{\lambda_{\mathrm{MAX}}(\vc{\Sigma}_0^\learner)c_{\source}^2}{\sigma_{\source,\type}^2} ),
\end{align}
for $\mathrm{P}_\mathrm{MAX}=\max_{k=1,\dots, K} \mathrm{P}_\mathrm{MAX}(k) = \max_{l\in\learners,\source\in\sources} \mathrm{P}[n^{\learner}(k)_{\source}\geq n^{\prime} + 1]$ ($n^{\learner}_{\source}(k)$ is a Poisson r.v. with parameter $\lambda^{\learner}_{\source}(k) T$), and $\lambda_\mathrm{MAX} = \max_{\vc{\lambda}\in\mathcal{D}}\|\vc{\lambda}\|_1$.
\end{lemma}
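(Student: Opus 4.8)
The plan is to treat this as an inexact Frank--Wolfe direction-finding guarantee, comparing the direction $\vc{v}(k)$ produced by the estimated gradient against the best direction for the \emph{true} gradient, both measured against the true gradient. First I would invoke Lemma~\ref{lem: HEAD bound4} coordinatewise for each of the $P_\SR$ entries of $\vc{\lambda}$ and take a union bound: since each coordinate's error bound holds with probability at least $1-(2 e^{-\gamma^2 N_1 N_2/2T^2(n'+1)} + |\sources|n'\delta + (|\sources|-1)\mathrm{P}_\mathrm{MAX})$ (using $\delta_{\source,\type}^{\p}\le\mathrm{P}_\mathrm{MAX}$), all $P_\SR$ bounds hold simultaneously with probability at least the quantity stated in the lemma. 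I would condition on this event for the rest of the argument.

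Next, abbreviating $g=\nabla U(\vc{\lambda}(k))$, $\hat g=\widehat{\nabla U(\vc{\lambda}(k))}$, and $B=\gamma\max_{\learner,\source}\log(1+\lambda_\mathrm{MAX}(\vc{\Sigma}_0^\learner)c_\source^2/\sigma_{\source,\type}^2)$, I would combine the per-coordinate bound of Lemma~\ref{lem: HEAD bound4} with monotonicity of $U$ (so $g_i\ge 0$, by Thm.~\ref{thm: submodular}) and $\prob[n_\source^\learner\ge n'+1]\le\mathrm{P}_\mathrm{MAX}$ to extract two one-sided inequalities valid for every coordinate $i$: the lower tail gives $g_i-\hat g_i\ge -B$, and the upper tail gives $g_i-\hat g_i\le B+\mathrm{P}_\mathrm{MAX}\,g_i$, i.e. $\hat g_i\ge(1-\mathrm{P}_\mathrm{MAX})g_i-B$.

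With these in hand, let $\vc{v}^\ast=\arg\max_{\vc{v}\in\feasibleset}\langle\vc{v},g\rangle$. I would decompose $\langle\vc{v}(k),g\rangle=\langle\vc{v}(k),\hat g\rangle+\langle\vc{v}(k),g-\hat g\rangle$ and use the optimality of $\vc{v}(k)$ for $\hat g$ (Eq.~\eqref{FW_1}) to lower bound the first term by $\langle\vc{v}^\ast,\hat g\rangle$. Because every $\vc{v}\in\feasibleset$ is coordinatewise nonnegative with $\|\vc{v}\|_1\le\lambda_\mathrm{MAX}$, applying the first one-sided inequality termwise gives $\langle\vc{v}^\ast,\hat g\rangle\ge(1-\mathrm{P}_\mathrm{MAX})\langle\vc{v}^\ast,g\rangle-B\lambda_\mathrm{MAX}$, while the second gives $\langle\vc{v}(k),g-\hat g\rangle\ge -B\lambda_\mathrm{MAX}$. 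Adding the two yields $\langle\vc{v}(k),g\rangle\ge(1-\mathrm{P}_\mathrm{MAX})\langle\vc{v}^\ast,g\rangle-2B\lambda_\mathrm{MAX}$, which is exactly Eq.~\eqref{eq: estimator guarantee full} with $a=1-\mathrm{P}_\mathrm{MAX}$ and $b=2\lambda_\mathrm{MAX}B$.

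The main obstacle is bookkeeping the \emph{asymmetry} of the Lemma~\ref{lem: HEAD bound4} error bound: the upper tail carries the extra multiplicative term $\prob[n_\source^\learner\ge n'+1]\,g_i$, and it is precisely this term that must be routed into the multiplicative factor $a=1-\mathrm{P}_\mathrm{MAX}$ (applied to $\vc{v}^\ast$, where nonnegativity of both $g_i$ and $v^\ast_i$ is essential) rather than into the additive slack $b$. The two appearances of $\lambda_\mathrm{MAX}$ in $b$ arise from separately bounding the $\ell_1$ norms of $\vc{v}^\ast$ and $\vc{v}(k)$. Everything else reduces to the union bound above and a termwise application of nonnegativity.
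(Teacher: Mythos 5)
Your proof is correct and follows essentially the same route as the paper's: both rest on a union bound of Lemma~\ref{lem: HEAD bound4} over all $P_\SR$ coordinates, the optimality of $\vc{v}(k)$ for the estimated gradient, and termwise application of the two one-sided error bounds (using nonnegativity of $\feasibleset$ and $\|\vc{v}\|_1\le\lambda_\mathrm{MAX}$), with the paper merely writing your decomposition as the chain $\langle\vc{v}(k),\nabla U\rangle\ge\langle\vc{v}(k),\widehat{\nabla U}\rangle-\lambda_\mathrm{MAX}B\ge\langle\vc{v}^\ast,\widehat{\nabla U}\rangle-\lambda_\mathrm{MAX}B\ge(1-\mathrm{P}_\mathrm{MAX})\langle\vc{v}^\ast,\nabla U\rangle-2\lambda_\mathrm{MAX}B$. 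The only blemish is that you swapped the labels ``first'' and ``second'' when deploying your two one-sided inequalities (the bound $\hat g_i\ge(1-\mathrm{P}_\mathrm{MAX})g_i-B$ comes from the upper tail, not the lower), but since both inequalities are stated explicitly the argument is unambiguous.
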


\begin{proof}
We now use the superscript $k$ to represent the parameters for the $k$th iteration: we find $\vc{v}^k\in\mathcal{D}$ that maximizes $\langle \vc{v}^k,\widehat{\nabla U(\vc{\lambda}^k)}\rangle$. 
Let $\vc{u}^{k}\in\mathcal{D}$ be the vector that maximizes $\langle \vc{u}^k,\nabla U(\vc{\lambda}^k)\rangle$ instead.
We have
\begin{align*}
    \langle \vc{v}^k,\nabla U(\vc{\lambda}^k)\rangle 
    \geq & \langle \vc{v}^k,\widehat{\nabla U(\vc{\lambda}^k)}\rangle - \lambda_{\text{MAX}} \cdot \gamma \max_{\learner \in \learners, \source \in \sources} \log(1 + \frac{\lambda_{\mathrm{MAX}}(\vc{\Sigma}_\learner)c_{\source}^2}{\sigma_{\source,\type}^2} )\\
    \geq & \langle \vc{u}^{k},\widehat{\nabla U(\vc{\lambda}^k)}\rangle - \lambda_{\text{MAX}} \cdot \gamma \max_{\learner \in \learners, \source \in \sources} \log(1 + \frac{\lambda_{\mathrm{MAX}}(\vc{\Sigma}_\learner)c_{\source}^2}{\sigma_{\source,\type}^2} )\\
    \geq & (1-\mathrm{P_{MAX}})\cdot\langle \vc{u}^{k},\nabla U(\vc{\lambda}^k)\rangle - 2\lambda_{\text{MAX}} \cdot \gamma \max_{\learner \in \learners, \source \in \sources} \log(1 + \frac{\lambda_{\mathrm{MAX}}(\vc{\Sigma}_\learner)c_{\source}^2}{\sigma_{\source,\type}^2} ),
\end{align*}
where the first and last inequalities are due to Lem.~\ref{lem: HEAD bound4} and the second inequality is because $\vc{v}^k$ maximizes $\langle \vc{v}^k,\widehat{\nabla U(\vc{\lambda}^k)}\rangle$. The above inequality requires the satisfaction of Lem.~\ref{lem: HEAD bound4} for every partial derivative. By union bound, the above inequality satisfies with probability greater than $1- (2\cdot e^{-\gamma^2 N_1 N_2/2T^2(n'+1)} + |\sources|n'\delta + \sum_{\source\in\sources,\type \in\types}\sum_{ \p \in \paths_{\source,\type}}(|\sources|-1)\delta_{\source, \type}^{ \p}) \cdot P_\SR$, and thus greater than $1-(2\cdot e^{-\gamma^2 N_1 N_2/2T^2(n'+1)} + |\sources|n'\delta + (|\sources|-1)\mathrm{P}_\mathrm{MAX})\cdot P_\SR$.
\end{proof}

By Lemma 2 in \cite{liu2022experimental}, and Lipschitz constant of $\nabla U$: $L = 2 T^2 P_\SR \cdot \allowbreak \max_{\learner \in \learners, \source \in \sources} \log(1 + \frac{\lambda_{\mathrm{MAX}}(\vc{\Sigma}_0^\learner)c_{\source}^2}{\sigma_{\source,\type}^2} )$, we get:
\begin{lemma}
With probability greater than $1- (2\cdot e^{-\gamma^2 N_1 N_2/2T^2(n'+1)} \\ - |\sources|n'\delta - (|\sources|-1)\mathrm{P}_\mathrm{MAX})\cdot P_\SR K$, the output solution $\vc{\lambda}(K)\in \mathcal{D}$ satisfies:
\begin{equation*}
\label{eq: final result short}
\begin{split}
    &U(\vc{\lambda}(K)) \ge (1 - e^{\mathrm{P}_\mathrm{MAX} -1 } ) \max_{\vc{\lambda}\in\mathcal{D}}U(\vc{\lambda}) -   (T^2 P_\SR \lambda_{\mathrm{MAX}}^2 + 2 \lambda_{\mathrm{MAX}}) \gamma \max_{\learner \in \learners, \source \in \sources} \log(1 + \frac{\lambda_{\mathrm{MAX}}(\vc{\Sigma}_0^\learner)c_{\source}^2}{\sigma_{\source,\type}^2} ) ,
\end{split}
\end{equation*}
where $K = \frac{1}{\gamma}$ is the number of iterations.
\end{lemma}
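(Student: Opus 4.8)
The plan is to combine the per-iteration oracle guarantee of Lem.~\ref{lem: estimation guarantee} with the abstract Frank--Wolfe convergence analysis for monotone continuous DR-submodular maximization (Lem.~2 in \cite{liu2022experimental}, following Bian et al.~\cite{bian2017guaranteed}), and then to translate the abstract constants into the explicit network parameters. Throughout, write $G_\mathrm{MAX} \equiv \max_{\learner\in\learners,\source\in\sources}\log\!\big(1+\lambda_\mathrm{MAX}(\vc{\Sigma}_0^\learner)c_\source^2/\sigma_{\source,\type}^2\big)$ for the recurring marginal-gain bound, so that $b = 2\lambda_\mathrm{MAX}\gamma G_\mathrm{MAX}$ in Lem.~\ref{lem: estimation guarantee}. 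The analysis needs three ingredients, two of which are already available: (i) the approximate linear-oracle inequality $\langle\vc{v}(k),\nabla U(\vc{\lambda}(k))\rangle \ge a\max_{\vc{v}\in\feasibleset}\langle\vc{v},\nabla U(\vc{\lambda}(k))\rangle - b$ with $a = 1-\mathrm{P}_\mathrm{MAX}$ (Lem.~\ref{lem: estimation guarantee}); and (ii) the fact that $U$ is monotone-increasing and continuous DR-submodular (Thm.~\ref{thm: submodular}), which, via the concavity of $U$ along non-negative directions together with monotonicity, yields the standard inequality $\max_{\vc{v}\in\feasibleset}\langle\vc{v},\nabla U(\vc{\lambda}(k))\rangle \ge \max_{\feasibleset}U - U(\vc{\lambda}(k))$, while down-closedness of $\feasibleset$ and $\vc{\lambda}(0)=\vc{0}$ keep every iterate feasible.

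The one remaining ingredient (iii) is the smoothness constant of $U$, which I would extract from the Hessian computation in App.~\ref{app: proof_submodular}. There, every second partial derivative $\partial^2 U/\partial\lambda\,\partial\lambda'$ is written as a Poisson-weighted second difference of the conditional expectations $\mathbb{E}[g^\learner(\vc{n}^\learner)\mid\cdot]$; each such difference is bounded in magnitude by $G_\mathrm{MAX}$ (the same bound already used to control $G^\learner(\x,\vc{e}_\source)-G^\learner(\x,\vc{0})$ in Lem.~\ref{lem: HEAD bound2}), while differentiating the Poisson PMF twice contributes a factor $T^2$. Summing the resulting entrywise bound over the $P_\SR$ coordinate pairs gives $\|\nabla U(\vc{\lambda})-\nabla U(\vc{\lambda}')\|\le L\|\vc{\lambda}-\vc{\lambda}'\|$ with $L = 2T^2 P_\SR\, G_\mathrm{MAX}$.

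With all three ingredients in hand, I would first invoke $L$-smoothness to obtain the per-step ascent estimate $U(\vc{\lambda}(k+1)) \ge U(\vc{\lambda}(k)) + \gamma\langle\vc{v}(k),\nabla U(\vc{\lambda}(k))\rangle - \tfrac{L}{2}\gamma^2\|\vc{v}(k)\|^2$, then substitute (i) and (ii) and bound $\|\vc{v}(k)\|^2\le\|\vc{v}(k)\|_1^2\le\lambda_\mathrm{MAX}^2$ (as $\vc{v}(k)\in\feasibleset$). Writing $h_k = \max_{\feasibleset}U - U(\vc{\lambda}(k))$, this produces the one-step recursion $h_{k+1}\le(1-a\gamma)h_k + \gamma b + \tfrac{L}{2}\gamma^2\lambda_\mathrm{MAX}^2$. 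Iterating it over the $K=1/\gamma$ steps, using $(1-a\gamma)^K\le e^{-aK\gamma}=e^{-a}=e^{\mathrm{P}_\mathrm{MAX}-1}$, and $h_0 = \max_{\feasibleset}U$ (since $U(\vc{0})=0$ by the shift in \eqref{eq: objective function}), yields $U(\vc{\lambda}(K)) \ge (1-e^{\mathrm{P}_\mathrm{MAX}-1})\max_{\feasibleset}U - \big(b + \tfrac{L}{2}\gamma\lambda_\mathrm{MAX}^2\big)$; substituting $b = 2\lambda_\mathrm{MAX}\gamma G_\mathrm{MAX}$ and $L = 2T^2 P_\SR G_\mathrm{MAX}$ collapses the additive term to exactly $(T^2 P_\SR\lambda_\mathrm{MAX}^2 + 2\lambda_\mathrm{MAX})\gamma G_\mathrm{MAX}$, as claimed. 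Finally, since (i) holds at a single iteration only with the probability stated in Lem.~\ref{lem: estimation guarantee}, I would close with a union bound over the $K$ iterations, which multiplies the failure probability by $K$ and gives the probability in the statement.

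I expect the main obstacle to be bookkeeping rather than conceptual. The delicate parts are (a) verifying that the entrywise Hessian bound from App.~\ref{app: proof_submodular} aggregates to $L = 2T^2 P_\SR G_\mathrm{MAX}$ across all coordinate pairs and is compatible with the norm used in the smoothness inequality, and (b) confirming that the single constant $G_\mathrm{MAX}$ simultaneously controls the oracle error $b$, the smoothness $L$, and the marginal-gain bound from the gradient-estimation lemmas, so that the three sources of error combine into one $\max\log(1+\cdots)$ factor. The convergence recursion itself is standard once these constants are pinned down.
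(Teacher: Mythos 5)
Your proposal is correct and follows essentially the same route as the paper: the paper's proof simply invokes the per-iteration oracle guarantee (Lem.~\ref{lem: estimation guarantee}), the abstract Frank--Wolfe convergence bound (Lem.~2 in \cite{liu2022experimental}, after Bian et al.~\cite{bian2017guaranteed}), and the Lipschitz constant $L = 2T^2 P_\SR \max_{\learner,\source}\log\big(1+\lambda_{\mathrm{MAX}}(\vc{\Sigma}_0^\learner)c_\source^2/\sigma_{\source,\type}^2\big)$, which is exactly the recursion $h_{k+1}\le(1-a\gamma)h_k+\gamma b+\tfrac{L}{2}\gamma^2\lambda_{\mathrm{MAX}}^2$ you unpack, with the same union bound over $K$ iterations. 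Your constant bookkeeping ($K\gamma b = b$, $K\cdot\tfrac{L}{2}\gamma^2\lambda_{\mathrm{MAX}}^2=\tfrac{L}{2}\gamma\lambda_{\mathrm{MAX}}^2$, and $(1-a\gamma)^{1/\gamma}\le e^{\mathrm{P}_\mathrm{MAX}-1}$ with $h_0=\max_{\feasibleset}U$) reproduces the stated bound exactly, and your sketch of $L$ from the Hessian computation matches what the paper asserts without further proof.
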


Then, we utilize subexponential Poisson tail bound described by Thm.~1 in \cite{ccanonne2017note}, and organize constants in above lemma to obtain Thm.~\ref{thm: FW}.
\begin{proof}
    From Thm.~1 in \cite{ccanonne2017note}, the probability is an increasing function w.r.t. $\lambda_{\source}^\learner$, and $\lambda_\mathrm{MAX}$ is an upper bound for $\lambda_{\source}^\learner$. Letting $u  = \frac{n' - \lambda_\mathrm{MAX}T}{\lambda_\mathrm{MAX}T}$, we have:
\begin{align*}
    \mathrm{P_{MAX}} & < e^{-\frac{(n' - \lambda_\mathrm{MAX}T)^2}{2\lambda_\mathrm{MAX}T} h(\frac{n' - \lambda_\mathrm{MAX}T}{\lambda_\mathrm{MAX}T})}  = e^{-\lambda_\mathrm{MAX}T( (1+u)\ln(1+u) -u )} \le \Omega(e^{-\lambda_\mathrm{MAX}T u}) = \epsilon_1,
\end{align*}
where the last inequality holds because $(1+u) \ln(1+ u) - u > u$ when $u$ is large enough, e.g., $u \ge 4$. Thus, $n' = O( \lambda_\mathrm{MAX} T + \ln \frac{1}{\epsilon_1})$.
Then, we have 
\begin{align*}
    (2\cdot e^{-\gamma^2 N_1 N_2/2T^2(n'+1)}- |\sources|n'\delta  - (|\sources|-1)\mathrm{P}_\mathrm{MAX})\cdot 
    P_\SR K = \epsilon_0,
\end{align*}
and assume that 
\begin{align*}
    \begin{cases}
    2\cdot e^{-\gamma^2 N_1 N_2/2T^2(n'+1)} \le \frac{\epsilon_0}{3 P_\SR K}  \\
    |\sources|n'\delta \le \frac{\epsilon_0}{3 P_\SR K} \\
    (|\sources|-1)\mathrm{P}_\mathrm{MAX} \le \frac{\epsilon_0}{3 P_\SR K}
    \end{cases}
\end{align*}
Then, we first get: $K = O(\frac{\epsilon_0}{P_\SR  (|\sources|-1)} \epsilon_1)$, then $\delta = O(\frac{\epsilon_0}{P_\SR K |\sources|n'})$, and $N_1 N_2 = \Omega(\ln \frac{P_\SR K}{\epsilon_0} \cdot T^2(n'+1)K^2)$.
Finally, we have
\begin{align*}
    \epsilon_2 = & (T^2 P_\SR \lambda_{\mathrm{MAX}}^2 + 2 \lambda_{\mathrm{MAX}}) \frac{1}{K} \cdot  \max_{\learner \in \learners, \source \in \sources} \log(1 + \frac{\lambda_{\mathrm{MAX}}(\vc{\Sigma}_\learner)c_{\source}^2}{\sigma_{\source,\type}^2} ),
\end{align*}
where $c_\source = 4 \sqrt{\lambda_{\mathrm{MAX}}(\vc{\Sigma}_\source)} \sqrt{d} + 2 \sqrt{\lambda_{\mathrm{MAX}}(\vc{\Sigma}_\source)} \sqrt{\log \frac{1}{\delta}}$.
\end{proof}
\fussy}{}

\end{document}